\tikzset{
        state/.style={
          circle,
          draw,
          minimum size=6mm,
    },
}
\newtheorem{defi}{Definition}
\newtheorem{lem}{Lemma}
\newtheorem{thm}{Theorem}
\newtheorem{cor}{Corollary}
\newtheorem{exa}{Example}
\newcommand{\pref}[2][a]{\ensuremath{{#1}.{#2}}}
\renewcommand{\merge}{\mathrel{\parallel}}
\newcommand{\nbisim}[1][]{%
    \setbox0=\hbox{\kern-.1ex{$\leftrightarrow$}\kern-.1ex}
    \setbox1=\vbox{\hbox{\raise .1ex \box0}\hrule}%
    \ensuremath{\not\mathrel{\hbox{\kern.1ex\box1\kern.1ex}_{#1}}}
  }
\newcommand{\defeqn}{\ensuremath{\mathrel{\stackrel{\textrm{def}}{=}}}}
\title{The Queue Automaton Revisited}
\author{Jos C.M. Baeten\inst{1} \and
Bas Luttik\inst{2}}
\authorrunning{J.C.M. Baeten and B. Luttik}
 \institute{CWI, Amsterdam, The Netherlands \and
 Eindhoven University of Technology,
 Eindhoven, The Netherlands}
 \titlerunning{The Queue Automaton}
\authorrunning{J. C. M. Baeten \& B. Luttik}
\begin{document}
\maketitle

\begin{abstract}
We consider the computational model of the Queue Automaton. An old result is that the deterministic queue automaton is equally expressive as the Turing machine. We introduced the Reactive Turing Machine, enhancing the Turing machine with a notion of interaction. The Reactive Turing Machine defines all executable processes. In this paper, we prove that the non-deterministic queue automaton is equally expressive as the Reactive Turing Machine.
Together with finite automata, pushdown automata and parallel pushdown automata, queue automata form a nice hierarchy of executable processes, with stacks, bags and queues as central elements.
\end{abstract}

\section{Introduction}

Replacing, in a pushdown automaton, the (last-in first-out) stack memory by a (first-in first-out) queue memory yields the computational model of the queue automaton. This computational model, sometimes also called a Post machine or a pullup automaton, has not raised a lot of attention in the literature, but it is a known result that the deterministic queue automaton is equally expressive as the Turing machine of \cite{Tu36}, so that it defines all computable languages and all computable functions. Implicitly, this result is already mentioned in \cite{Pos43}, and further given in \cite{Vol70,Man74}.

In this paper, we investigate the (non-deterministic) queue automaton. We do not define the language of a queue automaton directly, but instead, we define its process graph or transition system. A state of this process graph is given by the state of the queue automaton together with the contents of the queue and a transition is given by the label of the transition of the queue automaton. By considering the language equivalence class of the process graph, we obtain again the language, but we can also divide out other equivalence relations. Notable among these is branching bisimilarity (see \cite{GW96}). By dividing out branching bisimilarity, we obtain the \emph{process} of the queue automaton, incorporating a notion of interaction or communication. These notions of process and communication come from process theory or concurrency theory \cite{Mil89,BBR10}.
We prove that several variants of the queue automaton yield the same set of languages and the same set of processes, and prove that a queue automaton with two queues also yields the same set of languages and the same set of processes.

Also the Turing machine can be extended to incorporate processes and communication, but this is not so straightforward as we just sketched for the queue automaton. We achieved this, nonetheless, in \cite{BLT13}, where we introduced the Reactive Turing Machine. Whereas the classical Turing machine defines the class of computable languages and computable functions, the reactive Turing machine also yields a process graph that can be used to  define the class of executable processes. In this paper, we prove that the non-deterministic queue automaton is equally expressive as the reactive Turing machine. This shows again that the notion of executability we introduced is robust: it is given by different computational models and also by the $\pi$-calculus, see \cite{LY21}.

The queue automaton has certain advantages over the reactive Turing machine: it is mathematically simpler, the extension with interaction is easier, and we get a better process hierarchy, as we explain now. In the queue automaton, we can make the interaction between the finite control and the queue memory explicit, by proving that every executable process is branching bisimilar to a regular process communicating with a queue. In earlier papers \cite{BCT08,BL23}, we proved that every pushdown process is branching bisimilar to a regular process communicating with a stack, and every parallel pushdown process is branching bisimilar to a regular process communicating with a bag. Thus, the queue, stack and bag are the central elements in this Chomsky-Turing hierarchy.

This paper contributes to our ongoing project to integrate the theory
of automata and formal languages on the one hand and concurrency
theory on the other hand. The integration requires a more refined view
on the semantics of automata, grammars and expressions. Instead 
of treating automata as language acceptors, and grammars and
expressions as syntactic means to specify languages, we propose to view
them both as defining process graphs. The great benefit of this
approach is that process graphs can be considered modulo a plethora of
behavioural equivalences \cite{Gla93}. One can still consider language
equivalence and recover the classical theory of automata and formal
languages. But one can also consider finer notions such as
bisimilarity, which is better suited for interacting processes.

\section{Preliminaries}

As a common semantic framework we use the notion of a
\emph{labelled transition system}.

\begin{defi} \label{def:tsspace}
A \emph{labelled transition system} is a quadruple
$(\mathcal{S},\mathcal{A},{\xrightarrow{}},{\downarrow})$, where
\begin{enumerate}
\item $\mathcal{S}$ is a set of \emph{states};
\item $\mathcal{A}$ is a set of \emph{actions}, $\tau\not\in\mathcal{A}$ is the \emph{unobservable} or \emph{silent} action;
\item
${\xrightarrow{}}\subseteq{\mathcal{S}\times(\mathcal{A}\cup\{\tau\})\times\mathcal{S}}$ is
an $\mathcal{A}\cup\{\tau\}$-labelled \emph{transition relation}; and
     \item ${\downarrow}\subseteq\mathcal{S}$ is the set of \emph{final}, \emph{accepting} or \emph{terminating} states.
 \end{enumerate}
A \emph{process graph} is a
labelled transition system with a special
designated \emph{root state} or \emph{initial state} ${\uparrow}$, i.e., it is a quintuple
$(\mathcal{S},\mathcal{A},{\rightarrow},{\uparrow},{\downarrow})$ such that
$(\mathcal{S},\mathcal{A},{\rightarrow},{\downarrow})$ is a labelled transition system, and ${\uparrow}\in\mathcal{S}$.
We write $s\xrightarrow{a}s'$ for $(s,a,s')\in{\rightarrow}$
and $\term{s}$ for $s\in\mathalpha{\downarrow}$. We write $\mathcal{A}_{\tau}$ for $\mathcal{A} \cup \{\tau\}$.
\end{defi}

For $w\in\Act^{*}$ we define
$s\steps{w}t$ inductively, for all states $s,t,u$: first, $s \steps{\varepsilon} s$, and then, for $a \in \Act$, if $s \step{a} t$ and $t \steps{w} u$, then $s \steps{aw} u$, and
if $s \step{\tau} t$ and $t \steps{w} u$, then $s \steps{w} u$.
We see that $\tau$-steps do not contribute to the string $w$.
We write
$s\step{}t$ for there exists $a\in\Act_{\tau}$ such that
$s\step{a}t$. Similarly, we write $s\steps{}t$ for ``there exists
$w\in\Act^{*}$ such that $s\steps{w}t$'' and say that $t$ is
\emph{reachable} from $s$. 
Finally, we write $s \step{(a)} t$ for ``$s \step{a} t$ or $a = \tau$ and $s = t$''.

By considering language equivalence classes of process graphs, we
recover language equivalence as a semantics, but we can also consider other
equivalence relations. Notable among these is \emph{bisimilarity}.

\begin{defi}
  Let $(\mathcal{S},\mathcal{A},\rightarrow,{\downarrow})$ be a
  labelled transition system. A symmetric binary relation $R$ on
  $\mathcal{S}$ is a \emph{strong bisimulation} if it satisfies the following
  conditions for every $s,t\in\mathcal{S}$ such that $s\mathrel{R} t$ and for all $a\in\mathcal{A}_{\tau}$:
  \begin{enumerate}
    \item if $s\xrightarrow{a}s'$ for some $s'\in\mathcal{S}$,
      then there is a $t'\in\mathcal{S}$ such that
      $t\xrightarrow{a}t'$ and $s'\mathrel{R}t'$; and
    \item if $s{\downarrow}$, then $t{\downarrow}$.
    \end{enumerate}
    If there is a strong bisimulation relating $s$ and $t$ we write $s \bisim t$.
\end{defi}

Sometimes we can use the \emph{strong} version of bisimilarity
defined above, which does not give special treatment to
$\tau$-labelled transitions. In general, when we do give special treatment to $\tau$-labeled transitions, we use (some form of) \emph{branching bisimulation} \cite{GW96}.

\begin{defi}
Let $(\mathcal{S},\mathcal{A},\rightarrow,{\downarrow})$ be a
  labelled transition system. A symmetric binary relation $R$ on
  $\mathcal{S}$ is a \emph{branching bisimulation} if it satisfies the following
  conditions for every $s,t\in\mathcal{S}$ such that $s\mathrel{R} t$ and for all $a\in\mathcal{A}_{\tau}$:
    \begin{enumerate}
    \item if $s\xrightarrow{a}s'$ for some $s'\in\mathcal{S}$,
      then there are states $t',t'' \in\mathcal{S}$ such that
      $t \steps{\varepsilon} t'' \step{(a)} t'$, $s \mathrel{R} t''$ and $s'\mathrel{R}t'$; and
    \item if $s{\downarrow}$, then there is a state $t' \in \mathcal{S}$ such that $t \steps{\varepsilon} t'$ and  $t'{\downarrow}$.
    \end{enumerate}
    If there is a branching bisimulation relating $s$ and $t$, we write $s \bbisim t$.
   If $s \step{\tau} t$ and $s \bbisim t$, we say this $\tau$-step is \emph{inert}.
\end{defi}

%

\begin{thm}
Strong bisimilarity and branching bisimilarity are equivalence relations on labeled transition systems.
\end{thm}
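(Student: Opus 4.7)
The plan is to verify the three defining properties of equivalence (reflexivity, symmetry, transitivity) separately for $\bisim$ and $\bbisim$. Reflexivity is immediate: the identity relation $\{(s,s) : s \in \mathcal{S}\}$ is easily checked to satisfy both clauses of each definition, so it is simultaneously a strong and a branching bisimulation, witnessing $s \bisim s$ and $s \bbisim s$ for every state $s$. Symmetry is built into the definitions, since any $R$ of interest is required to be symmetric from the outset; hence the very witness of $s \bisim t$ (resp.\ $s \bbisim t$) also witnesses $t \bisim s$ (resp.\ $t \bbisim s$).

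For transitivity of strong bisimilarity, the plan is the standard composition argument: given strong bisimulations $R_1$ witnessing $s \bisim t$ and $R_2$ witnessing $t \bisim u$, form the relation $R = (R_1 \mathbin{;} R_2) \cup (R_1 \mathbin{;} R_2)^{-1}$. Any transition $s \xrightarrow{a} s'$ is matched via $R_1$ by some $t \xrightarrow{a} t'$, and then via $R_2$ by some $u \xrightarrow{a} u'$; termination is propagated analogously. This verifies $R$ is a strong bisimulation and relates $s$ to $u$.

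Transitivity of branching bisimilarity is the main obstacle, because the matching is now via paths $t \steps{\varepsilon} t'' \step{(a)} t'$ and, crucially, the \emph{intermediate} state $t''$ must be related to $s$. When attempting to transfer this from the $R_1$-side to the $R_2$-side, each $\tau$-step of the matching path $t \steps{\varepsilon} t''$ must itself be matched by $R_2$, producing in general a longer path $u \steps{\varepsilon} u''$ with many intermediate points that must all be related back to $s$. The standard technical device used to handle this is the \emph{stuttering lemma} (see \cite{GW96}): if $s_0 \step{\tau} s_1 \step{\tau} \cdots \step{\tau} s_n$ and $s_0 \mathrel{R} t$ and $s_n \mathrel{R} t$ for a branching bisimulation $R$, then in fact $s_i \mathrel{R} t$ for all $0 \le i \le n$. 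My plan is to prove this lemma first (by analysing the matching of each individual $\tau$-step and iterating), and then use it to show that the relational composition $R_1 \mathbin{;} R_2$ (symmetrised) is itself a branching bisimulation: the stuttering lemma guarantees that, while walking along the matching $\tau$-path constructed through $R_2$, each intermediate state is correctly related via the composition to the original source state.

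Having established these three properties for both notions, the theorem follows. I expect the reflexivity, symmetry, and the strong-bisimilarity part of transitivity to be routine, whereas the stuttering lemma and its application to transitivity of $\bbisim$ will require the most care.
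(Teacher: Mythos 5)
Your reflexivity and symmetry arguments, and the composition argument for the transitivity of strong bisimilarity, are fine. The problem is in your plan for the transitivity of branching bisimilarity, and it is twofold. First, the stuttering lemma as you state it---for an \emph{arbitrary} branching bisimulation $R$---is false: nothing forces a particular $R$ containing $(s_0,t)$ and $(s_n,t)$ to also contain the intermediate pairs $(s_i,t)$, even when those states are in fact branching bisimilar; the lemma holds only for $\bbisim$ itself (the union of all branching bisimulations), and proving it there is a separate, delicate induction that you would have to carry out without circular reliance on transitivity. Second, and more importantly, the obstacle you are trying to circumvent does not exist for the definition used here. The transfer condition only requires the state $t''$ from which the $\step{(a)}$-step is taken, and the resulting state $t'$, to be related to $s$ and $s'$ respectively; the intermediate states of the silent path $t \steps{\varepsilon} t''$ need not be related to anything.

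Because the paper adopts the $\step{(a)}$ formulation (the matching $a$-step may be vacuous when $a=\tau$, after an arbitrary silent prefix), its notion is Basten's \emph{semi-branching} transfer condition, and for that condition plain relational composition works: if $s \mathrel{R_1} t \mathrel{R_2} u$ and $s \step{a} s'$, match in $t$ to get $t \steps{\varepsilon} t'' \step{(a)} t'$ with $s \mathrel{R_1} t''$ and $s' \mathrel{R_1} t'$; then match each $\tau$-step of $t \steps{\varepsilon} t''$, and finally the $\step{(a)}$-step, from $u$ via $R_2$, obtaining $u \steps{\varepsilon} w'' \step{(a)} w'$ with $t'' \mathrel{R_2} w''$ and $t' \mathrel{R_2} w'$; hence $s$ is related to $w''$ and $s'$ to $w'$ by $R_1 ; R_2$, so $(R_1;R_2)\cup(R_2;R_1)$ is a branching bisimulation (termination propagates along the same chain). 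This is precisely the content of the reference \cite{Bas96} to which the paper delegates its proof. Had the original Van~Glabbeek--Weijland formulation been used, where a $\tau$-move may only be matched vacuously against $t$ itself, composition would indeed fail and a repair along your lines---with the stuttering lemma correctly restricted to $\bbisim$---would be needed; as it stands, your plan both misidentifies the difficulty and rests on a false auxiliary claim.
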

\begin{proof}
See \cite{Bas96} and \cite{GLT09}.
\end{proof}

Now suppose we are given a labelled transition system and two states $s,t$ in this labelled transition system. These states give rise to two process graphs with the given states as initial states. We say
the two process graphs are strongly bisimilar or branching bisimilar if there is a strong bisimulation or a branching bisimulation on the labelled transition system that relates $s$ and $t$.

A \emph{process} is a branching bisimilarity equivalence class of process graphs. We say a process is \emph{regular} if its branching bisimilarity equivalence class contains an element with finitely many states and finitely many transitions.

Finally, we define when a labelled transition system is \emph{deterministic}.

\begin{defi}
A labelled transition system $(\mathcal{S},\mathcal{A},\rightarrow{},{\downarrow})$ is \emph{deterministic} iff for all $s \in \mathcal{S}$ and for all $a \in \Act_{\tau}$ there is at most one $t \in \mathcal{S}$ with $s \step{a} t$.
Moreover, whenever $s \step{\tau} t$, there is no $a \in \mathcal{A}$ and $u \in \mathcal{S}$ with $s \step{a} u$.

A process graph $(\mathcal{S},\mathcal{A},{\rightarrow},{\uparrow},{\downarrow})$ is deterministic if the labelled transition system $(\mathcal{S},\mathcal{A},{\rightarrow},{\downarrow})$ is deterministic.
\end{defi}

\section{Queue Automata}

We define queue automata, prove that their type of transitions can be restricted without losing expressiveness, and prove that having two queues instead of one does not increase their expressiveness.

 \begin{defi}[Queue automaton]
 A \emph{queue automaton} $Q$ is a sixtuple $({\cal S},{\cal A},{\cal D},\rightarrow,\uparrow,\downarrow)$ where:
 \begin{enumerate}
 \item $\cal S$ is a non-empty finite set of states,
 \item $\cal A$ is a non-empty finite action alphabet,
 \item $\cal D$ is a non-empty finite data alphabet,
 \item $\mathalpha{\rightarrow} \subseteq {\cal S} \times  {\cal A}_{\tau} \times ({\cal D} \cup \{\varepsilon,*\}) \times {\cal D}^{*} \times {\cal S}$ is a finite set of \emph{transitions} or \emph{steps},
 \item $\mathalpha{\uparrow} \in {\cal S}$ is the root state or initial state,
 \item $\mathalpha{\downarrow} \subseteq {\cal S}$ is the set of final, accepting or terminating states.
 \end{enumerate}
 If $(s,a,d,\delta,t) \in \mathalpha{\rightarrow}$ with $d \in \cal D$, we write $s \xrightarrow{a[d / \delta]} t$, and this means that the machine, when it is in state $s$ and $d$ is the head element of the queue, 
 can execute action $a$, dequeue this $d$ and enqueue the string $\delta$ and thereby move to state $t$. Likewise, writing $s \xrightarrow{a[\varepsilon / \delta]} t$ means that the machine, when it is in state $s$ and the queue is empty, can execute action $a$, enqueue the string $\delta$ and thereby move to state $t$. Writing  $s \xrightarrow{a[* / \delta]} t$ means that the machine, when it is in state $s$, can execute action $a$,  enqueue $\delta$ and thereby move to state $t$, irrespective of the contents of the queue and without dequeueing anything.
 In steps $s \xrightarrow{\tau[d / \delta]} t$, $s \xrightarrow{\tau[\varepsilon / \delta]} t$ and $s \xrightarrow{\tau[* / \delta]} t$, no action is executed, only the queue is modified.
 The semantic interpretation of the elements of the action alphabet can be left unspecified, but often they
stand for some kind of interaction with the automaton.
   \end{defi}
   
  In definitions of queue automata appearing in the literature, the set of transitions is sometimes defined a little differently, but all of them are equally expressive, in the sense that they all give rise to the same set of languages and the same set of processes. For instance, in \cite{KMW18}, the $a[*/\delta]$-labeled transitions do not occur, but there are two variants of the $a[d/\delta]$-labeled transitions: one where $d$ is dequeued, and one where $d$ is not dequeued. Moreover, instead of using general sequences $\delta$, only singleton sequences or empty sequences are used. We show some of these expressiveness results in the sequel.
   
  The notion of queue automaton is very similar to the classical notion of a pushdown automaton (see, e.g., \cite{HMU07}), only there, a stack is used instead of a queue, and in a stack, push and pop (as enqueue and dequeue are called) occur according to the last-in first-out principle. Therefore, in a pushdown automaton, we do not need the separate $s \xrightarrow{a[*/\delta]} t$ transitions, as they can be replaced by a  $s \xrightarrow{a[\varepsilon/\delta]} t$ transition in combination with  $s \xrightarrow{a[d/\delta d]} t$ transitions for all $d \in {\cal D}$. Further on, we will see that the expressivity of the queue automaton does not change when we omit these transitions, using a recycle operation and working modulo branching bisimulation.
  
  Now we could proceed to define the language of a queue automaton, but instead, we take an intermediate step and first define the process graph of a queue automaton. By considering the language of this process graph, we find the language of the queue automaton again, and by considering its branching bisimulation equivalence class, we obtain the process of the queue automaton.

\begin{defi}
 Let $Q = ({\cal S},{\cal A},{\cal D},\rightarrow,\uparrow,\downarrow)$ be a queue automaton. The \emph{process graph} of $Q$ is defined as follows, for all $\delta \in {\cal D}^{*}$:
 \begin{enumerate}
 \item  the set of states is $\{ (s,\delta) \mid s \in {\cal S}, \delta \in {\cal D}^{*} \}$;
 \item the set of actions is ${\cal A}$;
 \item The transition relation is generated by the following clauses:
 \begin{itemize}
 \item  if $s \xrightarrow{a[\varepsilon / \delta]} t$ then $(s,\varepsilon) \step{a} (t,\delta)$;
 \item  if $s \xrightarrow{a[d / \delta]} t$ then  $(s,\zeta d) \step{a} (t,\delta \zeta)$, for all $\zeta \in {\cal D}^{*}$;
\item  if $s \xrightarrow{a[* / \delta]} t$ then $(s,\zeta) \step{a} (t,\delta \zeta)$, for all $\zeta \in {\cal D}^{*}$;
 \end{itemize}
 \item the initial state is $(\uparrow,\varepsilon)$; and
 \item $(s,\delta) \downarrow$ if $s \downarrow$.
 \end{enumerate}
\end{defi}

Usually, we consider only those states $(s,\delta)$ that are reachable from the initial state.
According to this definition, a state $(s, \delta)$ can be final also when the queue contents $\delta$ is non-empty. Defining that only states $(s, \varepsilon)$ can be final is more limiting, and yields a smaller set of processes that are the process of a queue automaton. We can still code in a queue automaton that only states of the form $(s, \varepsilon)$ can be final, by only allowing to enter such a state by means of a transition labeled by $a[\varepsilon/\varepsilon]$. This is illustrated in Examples~\ref{exaww} and \ref{queueqa}. For an extensive treatment of termination conditions for pushdown automata and parallel pushdown automata, see \cite{vT11}.

 \begin{defi}
 Let $Q = ({\cal S},{\cal A},{\cal D},\rightarrow,\uparrow,\downarrow)$ be a queue automaton. The language \emph{accepted} by $Q$, ${\cal L}(Q)$, is the language of its transition system, i.e.
 \[  {\cal L}(Q) = \{ w \in {\cal A}^{*} \mid \exists s \in {\cal S} \quad \exists  \delta \in {\cal D}^{*}  \mbox{ such that } s \downarrow \mbox{ and } (\uparrow,\varepsilon) \steps{w} (s,\delta) \}. \]
 The \emph{process} of $Q$, ${\cal P}(Q)$, is the branching bisimulation equivalence class of its transition system, often represented by its minimal element (identifying all branching bisimilar states).
 \end{defi}
 
 \begin{exa} \label{exaww}
 Let us construct a queue automaton for the language $\{ ww \mid w \in \{a,b\}^{*} \}$. This language is not a pushdown language and also not a parallel pushdown language. 
 We use $a,b$ also as data symbols, so ${\cal D} = \{a,b\}$. In the initial state, a string can be read and enqueued. At some point (non-deterministically) it will switch to dequeue elements of the string again by moving to the second state, using the same order (the queue is first-in-first-out). Acceptance or termination takes place when the queue is empty again. See Figure~\ref{fig:ww}. In the figure, we represent the states by circles, the initial state by a small incoming arrow and a final state by a double circle. An arrow that is labeled with multiple labels means that there is such a transition for each of these labels.
\end{exa}
 
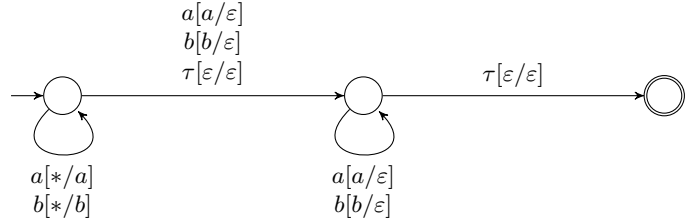
\begin{figure}[htb]
\begin{center}
\begin{tikzpicture}[->,>=stealth',node distance=4cm, node font=\footnotesize, state/.style={circle, draw, minimum size=.5cm,inner sep=0pt}]
  \node[state,initial,initial text={},initial where=left] (s0) {};
  \node[state] [right of=s0] (s1) {};
  \node[state,accepting] [right of=s1] (s2) {};
  
  \path[->]
  (s0) edge[in=315,out=225,loop]
         node[below] {$\begin{array}{c}a[*/a]\\
                                      b[*/b] \end{array}$} (s0)
  (s0) edge node[above] {$\begin{array}{c}a[a/\varepsilon]\\
                                      b[b/\varepsilon] \\
                                      \tau[\varepsilon/\varepsilon] \end{array}$} (s1)
 (s1) edge[in=315,out=225,loop]
         node[below] {$\begin{array}{c}a[a/\varepsilon]\\
                                      b[b/\varepsilon] \end{array}$} (s1)
  (s1) edge node[above] {$\tau[\varepsilon/\varepsilon]$} (s2);                                    
\end{tikzpicture}
\end{center}
\caption{Queue automaton for the language $\{ ww \mid w \in \{a,b\}^{*} \}$.}\label{fig:ww}
\end{figure}

  \begin{exa}
 Figure~\ref{fig:anbncn} shows a queue automaton for the language $\{ a^nb^nc^n \mid n > 0 \}$; it uses data symbols $1,2$. The language is not a pushdown language and also not a parallel pushdown language. 
\end{exa}

\begin{figure}[htb]
\begin{center}
\begin{tikzpicture}[->,>=stealth',node distance=3cm, node font=\footnotesize, state/.style={circle, draw, minimum size=.5cm,inner sep=0pt}]
  \node[state,initial,initial text={},initial where=left] (s0) {};
  \node[state] [right of=s0] (s1) {};
  \node[state] [right of=s1] (s2) {};
  \node[state,accepting] [right of=s2] (s3) {};
  
  \path[->]
  (s0) edge[in=315,out=225,loop]
         node[below] {$a[*/1]$} (s0)
  (s0) edge node[above] {$b[1/2]$} (s1)
 (s1) edge[in=315,out=225,loop]
         node[below] {$b[1/2]$} (s1)
  (s1) edge node[above] {$c[2/\varepsilon]$} (s2)
  (s2) edge[in=315,out=225,loop]
  	node[below] {$c[2/\varepsilon]$} (s2)
(s2) edge node[above] {$\tau[\varepsilon/\varepsilon]$} (s3);                                    
\end{tikzpicture}
\end{center}
\caption{Queue automaton for the language $\{ a^nb^nc^n \mid n>0\}$.}\label{fig:anbncn}
\end{figure}
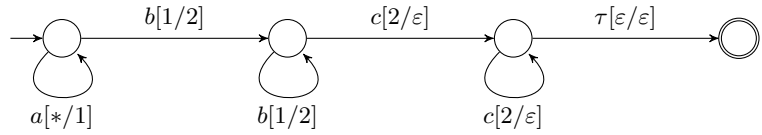
 
 \begin{exa} \label{queueqa}
 We can define the behaviour of the (first-in first-out) queue itself. Given a finite data set $\cal D$, the queue process can execute the following actions:
 \begin{itemize}
 \item $i?d$, {\bf enqueue} data element $d$ (input at port $i$);
 \item $o!d$, {\bf dequeue} data element $d$, if this is the element at the head of the queue (output at port $o$);
 \item $o!\varepsilon$, show that the queue is empty.
 \end{itemize}
 We consider two variants: either the queue can always terminate, or it can terminate only when empty. 
 We can define  queue automata for these two queues, see Figure~\ref{fig:queue}: for all  $d \in \cal D$, there are the edges shown. The queue automaton on the left has only one state and can always terminate, irrespective of the contents of the queue the behaviour of which it represents. The queue automaton on the right has an additional $\tau$-transition, that can only be executed when the queue the behaviour of which it represents has become empty: for this queue automaton, termination can only take place when the queue is empty.
 \end{exa}
 
 \begin{figure}[htb]
\begin{center}
\begin{tikzpicture}[->,>=stealth',node distance=4cm, node
  font=\footnotesize, state/.style={circle, draw, minimum
    size=.75cm,inner sep=0pt}]
  
  \node[state,initial,initial text={},initial where=left,accepting]
  (s0) {};
  \node[state,initial,initial text={}, initial where=left,accepting, right of=s0] (s1) {};
   \node[state, right of=s1] (s2) {};
  
   \path[->]
   (s0) edge[in=315,out=225,loop]
   node[below] {$\begin{array}{c}i?d[*/d] \\
   			o!d[d/\varepsilon] \\
			o!\varepsilon[\varepsilon/\varepsilon]\end{array}$} (s0);
   
  \path[->]
  (s1) edge[in=315,out=225,loop]
  node[below] {$o!\varepsilon[\varepsilon/\varepsilon]$} (s1)
  (s1) edge[bend left] node[above] {$i?d[*/d]$} (s2)
  (s2) edge[in=315,out=225,loop]
  node[below] {$\begin{array}{c} i?d[*/d] \\
  			o!d[d/\varepsilon] \end{array}$} (s2)
   (s2) edge[bend left] node[below]
                                   {$\tau[\varepsilon/\varepsilon]$} (s1);

\end{tikzpicture}
\end{center}
\caption{Queue automata of the queue.}\label{fig:queue}
\end{figure}
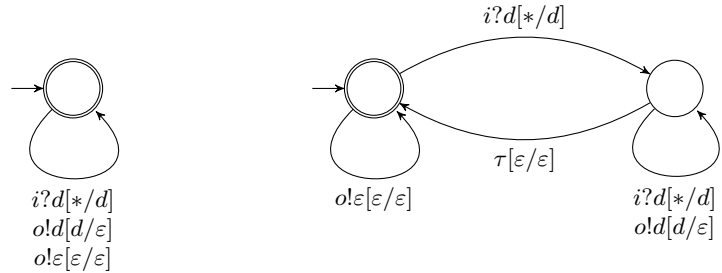

We stated before that the $a[*/\delta]$-labeled transitions are not really needed in the definition of queue automata. We prove this in the following lemma.

\begin{lem} \label{stardollar}
Let  $Q = ({\cal S},{\cal A},{\cal D},\rightarrow,\uparrow,\downarrow)$ be a queue automaton. Then there is a queue automaton $Q'$ that does not have any $a[*/\delta]$-labeled transitions and has the same process as $Q$.
\end{lem}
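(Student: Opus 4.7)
The plan is to eliminate each $a[*/\delta]$-labeled transition by simulating it with a \emph{recycle gadget}: when the queue is non-empty with head $d$, dequeue $d$, re-enqueue it together with a fresh marker symbol $\#$, silently rotate the queue until $\#$ surfaces at the head, and finally pop $\#$ while enqueuing $\delta$ in its place. Concretely, for $Q = (\mathcal{S}, \mathcal{A}, \mathcal{D}, {\rightarrow}, {\uparrow}, {\downarrow})$, I would build $Q' = (\mathcal{S}', \mathcal{A}, \mathcal{D} \cup \{\#\}, {\rightarrow'}, {\uparrow}, {\downarrow})$ by introducing, for each $*$-transition $\rho = (s \xrightarrow{a[*/\delta]} t)$ and each $d \in \mathcal{D}$, a fresh non-terminating state $r_{\rho,d}$, and replacing $\rho$ by the transitions
\[
s \xrightarrow{a[\varepsilon/\delta]} t, \quad
s \xrightarrow{a[d/d\#]} r_{\rho,d}, \quad
r_{\rho,d} \xrightarrow{\tau[e/e]} r_{\rho,d}, \quad
r_{\rho,d} \xrightarrow{\tau[\#/\delta]} t
\]
for every $d,e \in \mathcal{D}$. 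All other transitions and terminating states of $Q$ are kept, and $\#$ occurs in no other transition, so it is present in a reachable queue only transiently, between the opening $a[d/d\#]$- and closing $\tau[\#/\delta]$-steps of one gadget instance.

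To show $Q$ and $Q'$ have the same process, I would exhibit a branching bisimulation $R$ on the disjoint union of their process graphs given by two clauses: (i) $(s,\zeta) \mathrel{R} (s,\zeta)$ for every $s \in \mathcal{S}$ and $\zeta \in \mathcal{D}^{*}$; and (ii) for every $*$-transition $\rho = (s \xrightarrow{a[*/\delta]} t)$, every $d \in \mathcal{D}$, every $\zeta' \in \mathcal{D}^{*}$, and every decomposition $\xi_1\xi_2 = d\#\zeta'$, $(t, \delta\zeta' d) \mathrel{R} (r_{\rho, d}, \xi_2 \xi_1)$. Clause (ii) records that every cyclic rotation of $d\#\zeta'$ that is visited at an $r_{\rho,d}$-state is an intermediate stage whose destination is the single $Q$-state $(t, \delta \zeta' d)$.

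Verifying the branching-bisimulation conditions is then routine case analysis. From an $(s,\zeta)$-state on the $Q$-side, non-$*$ transitions are matched verbatim; an $a[*/\delta]$-step with $\zeta = \varepsilon$ is matched by the added $a[\varepsilon/\delta]$-transition; and an $a[*/\delta]$-step with $\zeta = \zeta' d$ is matched by the $a[d/d\#]$-transition into $r_{\rho,d}$, whose target $(r_{\rho,d}, d\#\zeta')$ is $R$-related to $(t, \delta\zeta' d)$ via clause (ii) with $\xi_1 = \varepsilon$. From a state $(t,\delta\zeta' d)$ that is also $R$-related to some rotation $(r_{\rho,d}, \xi_2\xi_1)$, any outgoing $Q$-transition is matched on the $Q'$-side by first walking through the $\tau$-rotations until $\#$ surfaces, firing the closing $\tau[\#/\delta]$ to reach $(t, \delta\zeta' d)$ in $Q'$, and then mimicking the transition; every intermediate state on this $\tau$-walk is itself $R$-related to $(t,\delta\zeta' d)$ by clause (ii). The backward direction is symmetric, with the key observation that the novel $\tau[e/e]$- and $\tau[\#/\delta]$-transitions are inert, as both endpoints relate to the same $Q$-state, and termination is vacuously preserved at rotation states because they are non-terminating. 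The only bookkeeping subtlety --- and the main obstacle --- is formulating clause (ii) in terms of \emph{arbitrary} cyclic rotations of $d\#\zeta'$, so that the closing step lands in exactly $(t, \delta\zeta' d)$ no matter where in the rotation the marker $\#$ happens to be when the match is triggered.
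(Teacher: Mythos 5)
Your proposal is correct and follows essentially the same route as the paper's proof: replace each $a[*/\delta]$-transition by an $a[\varepsilon/\delta]$-transition for the empty-queue case and, for a non-empty queue, by a dequeue-and-mark step into a fresh state, followed by inert $\tau$-rotations of the queue and a closing $\tau$-step that consumes the marker and enqueues $\delta$. Your only deviations are refinements: you allocate a fresh rotation state per $*$-transition (the paper uses a single state $s^{*}$ per source state, which your finer indexing makes more robust when a state has several $*$-transitions), and you spell out the branching bisimulation explicitly via cyclic rotations where the paper simply asserts that the added $\tau$-steps are inert.
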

\begin{proof}
We construct the queue automaton $Q'$ as follows. The data set is $\cal D$ plus an extra data element $\$ \not\in {\cal D}$. The state set is ${\cal S}$ plus fresh states $s^{*}$ for each $s \in {\cal S}$. $Q'$ has the same initial state and the same final states as $Q$, and the same $a[\varepsilon /\delta]$-labeled and $a[d/\delta]$-labeled transitions.
Finally, whenever $Q$ has a transition of the form $s \xrightarrow{a[*/\delta]} t$, $Q'$  has a transition $s \xrightarrow{a[\varepsilon/\delta]} t$ and, for all $d \in {\cal D}$, transitions $s \xrightarrow{a[d/d\$]} s^{*} \xrightarrow{\tau[\$/\delta]} t$ and $s^{*} \xrightarrow{\tau[d/d]} s^{*}$. 

Now we show that the process graphs of $Q$ and $Q'$ are branching bisimilar. We start out from the identity relation on all common states. Note that, whenever $s \xrightarrow{a[*/\delta]} t$ in $Q$, then we have $(s,\varepsilon) \step{a} (t, \delta)$ in the process graph of $Q$ but also in the process graph of $Q'$. For nonempty memory contents, say of the form $\zeta d$ for some $\zeta \in {\cal D}^{*}, d \in {\cal D}$, we have $(s,\zeta d) \step{a} (t, \delta \zeta d)$ in the process graph of $Q$, and $(s,\zeta d) \step{a} (s^{*}, d\$ \zeta) \steps{\varepsilon} (s^{*}, \zeta d \$) \step{\tau} (t, \delta \zeta d)$ in the process graph of $Q'$, so it is enough to relate $s^{*}$ to $t$.
Note that all of the added $\tau$-steps are inert because whenever some state in the process graph has an outgoing $\tau$-transition, then this $\tau$-transition is the unique outgoing transition.
Therefore, the source and target states of the $\tau$-transition are branching bisimilar.
\end{proof}

We use the $a[*/\delta]$-labeled transitions, nonetheless, because they allow for concise descriptions of interesting processes such as the queue of Example~\ref{queueqa}.
In the following proofs, it will be useful on occasion to restrict the transitions in a queue automaton to only singleton enqueues, and to separate enqueues and dequeues.

\begin{defi}
A transition in a queue automaton is a \emph{singleton enqueue} iff it has a label of the form $a[*/d]$ for some $d \in {\cal D}$; it is a \emph{separate dequeue} iff it has a label of the form $a[\varepsilon/\varepsilon]$ or $a[d/\varepsilon]$ for some $d \in {\cal D}$.
\end{defi}

Notice that the queue automata of the queue of Example~\ref{queueqa} have only singleton enqueues and separate dequeues.

\begin{lem} \label{singenqsepdeq}
Let  $Q = ({\cal S},{\cal A},{\cal D},\rightarrow,\uparrow,\downarrow)$ be a queue automaton. Then, there is a queue automaton $Q'$ with only singleton enqueues and separate dequeues with the same process and language.
\end{lem}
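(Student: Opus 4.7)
The plan is to handle the two restrictions---singleton enqueues and separate dequeues---by first eliminating all queue-agnostic bulk enqueues, and then splitting every remaining transition into one visible-action step followed by a chain of silent singleton enqueues. More precisely, I would start by invoking Lemma~\ref{stardollar} to replace $Q$ by an equivalent queue automaton whose transitions all have labels of the form $a[\varepsilon/\delta]$ or $a[d/\delta]$ with $\delta \in \mathcal{D}^{*}$; by that lemma this preserves the process (and hence the language). What remains is to convert every bulk enqueue $\delta = d_1 d_2 \cdots d_n$ with $n \geq 1$ into a sequence of singleton enqueues.

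Concretely, for every transition $\rho\colon s \xrightarrow{a[e/\delta]} t$ with $e \in \mathcal{D}\cup\{\varepsilon\}$ and $\delta = d_1 \cdots d_n$, $n \geq 1$, I would add fresh non-accepting intermediate states $s_1^{\rho}, \ldots, s_n^{\rho}$ and replace $\rho$ by the chain
\[
s \xrightarrow{a[e/\varepsilon]} s_1^{\rho} \xrightarrow{\tau[*/d_n]} s_2^{\rho} \xrightarrow{\tau[*/d_{n-1}]} \cdots \xrightarrow{\tau[*/d_2]} s_n^{\rho} \xrightarrow{\tau[*/d_1]} t.
\]
The first step is a separate dequeue that performs the visible action $a$ together with the dequeue (if any) but enqueues nothing; each of the subsequent $\tau$-steps is a singleton enqueue. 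Since the semantics of a singleton enqueue prepends its letter to the queue contents in the notation of the process graph, enqueuing $d_n, d_{n-1}, \ldots, d_1$ in this reverse order produces exactly $\delta\zeta$ from any residual queue $\zeta$, matching the bulk transition. Transitions with $\delta = \varepsilon$ are already separate dequeues and are kept unchanged.

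To witness branching bisimilarity between the process graphs of $Q$ and $Q'$, I would use the symmetric relation $R$ that is the identity on common states $(s,\zeta)$ and that additionally, for each $\rho$ as above and each $\zeta \in \mathcal{D}^{*}$, pairs the intermediate state $(s_i^{\rho}, d_{n-i+2} \cdots d_n \zeta)$ with the eventual target $(t, \delta \zeta)$ for $i = 1, \ldots, n$. The key observation that makes $R$ a branching bisimulation is that each intermediate state $s_i^{\rho}$ is fresh and its sole outgoing transition is the next $\tau$-link of the chain, so every newly introduced $\tau$-step is inert. Hence $\tau$-steps leaving an intermediate state are matched in $Q$ by remaining stationary (both endpoints of the $\tau$ stay related to $(t, \delta \zeta)$), while any transition of $Q$ leaving $(t, \delta \zeta)$ is matched in $Q'$ by first running the inert $\tau$-chain up to $(t, \delta \zeta)$ and then firing the corresponding (possibly itself split) outgoing transition.

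The hard part is really bookkeeping rather than conceptual: one has to track how the prefix $d_{n-i+2} \cdots d_n$ accumulates as the singleton $\tau$-enqueues fire, and check each case---$e=\varepsilon$ versus $e=d$, and the matching of new outgoing transitions at $t$---against the indexing of $R$. Once $R$ has been verified to be a branching bisimulation, the statement follows: branching bisimilarity yields $\mathcal{P}(Q) = \mathcal{P}(Q')$ by definition and, since branching bisimilar process graphs accept the same language, also $\mathcal{L}(Q) = \mathcal{L}(Q')$.
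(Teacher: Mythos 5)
Your proposal is correct, and its core construction---replacing a bulk enqueue $a[e/d_1\cdots d_n]$ by a separate dequeue $a[e/\varepsilon]$ followed by a chain of fresh intermediate states performing singleton $\tau$-enqueues in the order $d_n,\dots,d_1$, and arguing branching bisimilarity via inertness of the added $\tau$-steps---is exactly the paper's construction for the $a[\varepsilon/\delta]$ and $a[d/\delta]$ cases, including the correct reverse ordering forced by the process-graph semantics and the correct indexing of the witnessing relation. Where you genuinely diverge is in the treatment of the $a[*/\delta]$-labelled transitions: you preprocess with Lemma~\ref{stardollar} to remove them entirely and then split what remains, whereas the paper handles them in place, turning $a[*/d_1\cdots d_n]$ with $n>1$ into a chain whose first step is the singleton enqueue $a[*/d_n]$, and dealing with the awkward case $a[*/\varepsilon]$ (which is neither a singleton enqueue nor a separate dequeue) by a dedicated gadget that enqueues a marker $\$$, recycles the queue to bring $\$$ to the head, and then silently dequeues it. Your modular route buys a shorter, cleaner case analysis and avoids reinventing that recycling gadget inside this proof (it is hidden inside Lemma~\ref{stardollar}, whose output forms are all of the shapes your second stage handles); the cost is an extra appeal to transitivity of branching bisimilarity and a slightly larger composite automaton. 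The paper's route is self-contained and exploits the fact that singleton enqueues are themselves $*$-transitions, so $*$ need not be eliminated, only normalised. One small point worth making explicit either way: the lemma asks for preservation of the language as well as the process, so you should note (as you do) that branching bisimilar process graphs, with the acceptance condition used here, accept the same language.
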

\begin{proof}
 Let $N$ be the maximum length of a data sequence occurring in any transition of $Q$, and enumerate the transitions of $Q$ in a sequence $u_1, \ldots, u_K$. The queue automaton $Q'$ uses an extra data element $\$ \not\in {\cal D}$ and has the set of states of $Q$ plus new states $\{s_{ik} \mid i \leq N, k \leq K\} \cup \{s_{1k}^d \mid d \in {\cal D}, k \leq K\}$. $Q'$ has the same action set, initial state and final states as $Q$. The transitions of $Q'$ are defined as follows:
\begin{enumerate}
\item For each transition in $Q$ of the form $s \xrightarrow{a[*/\varepsilon]} t$, if this is transition $u_k$, then $Q'$ has transitions $s \xrightarrow{a[*/\$]} s_{1k} \xrightarrow{\tau[\$/\varepsilon]} t$, and transitions $s_{1k} \xrightarrow{\tau[d/\varepsilon]} s^d_{1k} \xrightarrow{\tau[*/d]} s_{1k}$ for all $d \in {\cal D}$;
\item For each transition in $Q$ of the form $s \xrightarrow{a[*/\delta]} t$ with $\delta \neq \varepsilon$, if this is transition $u_k$ and $\delta = d_1 \ldots d_n$ with $n>1$, then $Q'$ has transitions $s \xrightarrow{a[*/d_n]} s_{1k} \xrightarrow{\tau[*/d_{n-1}]} \cdots s_{(n-1)k} \xrightarrow{\tau[*/d_1]} t$;
\item For each transition in $Q$ of the form $s \xrightarrow{a[\varepsilon/\delta]} t$ with $\delta \neq \varepsilon$, if if this is transition $u_k$ and $\delta = d_1 \ldots d_n$ with $n \geq 1$, then $Q'$ has transitions $s \xrightarrow{a[\varepsilon/\varepsilon]} s_{1k} \xrightarrow{\tau[*/d_{n}]} \cdots s_{nk} \xrightarrow{\tau[*/d_1]} t$;
\item For each transition in $Q$ of the form $s \xrightarrow{a[d/\delta]} t$ with $\delta \neq \varepsilon$, if if this is transition $u_k$ and $\delta = d_1 \ldots d_n$ with $n \geq 1$, then $Q'$ has transitions $s \xrightarrow{a[d/\varepsilon]} s_{1k} \xrightarrow{\tau[*/d_{n}]} \cdots s_{nk} \xrightarrow{\tau[*/d_1]} t$.
\end{enumerate}
In the process graph of $Q'$, all the added $\tau$-steps are inert, so it is branching bisimilar to the process graph of $Q$.
\end{proof}

We see that the notion of the queue automaton is quite robust, as different variants of the queue automaton yield the same set of processes and the same set of languages. As a final illustration of the robustness of the notion of a queue automaton, we show that we can code a memory of two queues into one queue. In order to show this, we first define what a queue automaton with two queues is, and then show that its behaviour can also be obtained by a  queue automaton with one queue.

 \begin{defi}[Queue automaton with two queues]
 A \emph{queue automaton with two queues} $Q$ is a sixtuple $({\cal S},{\cal A},{\cal D},\rightarrow,\uparrow,\downarrow)$ that is just like a queue automaton, only the transition relation $\mathalpha{\rightarrow}$ is now a subset of 
${\cal S} \times  {\cal A}_{\tau} \times ({\cal D} \cup \{\varepsilon,*\})^2 \times ({\cal D}^{*})^2 \times {\cal S}$, i.e. a pair of elements of ${\cal D} \cup \{\varepsilon,*\}$ and a pair of sequences from ${\cal D}^{*}$ is considered.
We write $s \xrightarrow{a[(d,e) / (\delta,\zeta)]} t$ for  $(s, a, d,e, \delta,\zeta, t) \in \mathalpha{\rightarrow}$ (here, $s,t \in {\cal S}, a \in {\cal A}_{\tau}, d,e \in {\cal D} \cup \{\varepsilon,*\}, \delta, \zeta \in {\cal D}^{*}$).
   \end{defi}
   
  From this definition, we get a process graph as expected: the states of the process graph are the triples $(s, \delta, \zeta)$ reachable from initial state $(\uparrow, \varepsilon, \varepsilon)$ by means of the transition relation generated from the following clauses:
  \begin{enumerate}
   \item if $s \xrightarrow{a[(\varepsilon,\varepsilon) / (\delta,\delta')]} t$ then $(s,\varepsilon,\varepsilon) \step{a} (t,\delta,\delta')$;
   \item if $s \xrightarrow{a[(\varepsilon,*) / (\delta,\delta')]} t$ then $(s,\varepsilon,\zeta) \step{a} (t,\delta,\delta' \zeta)$ for all $\zeta \in {\cal D}^{*}$;
   \item if  $s \xrightarrow{a[(*,\varepsilon) / (\delta,\delta')]} t$ then  $(s,\zeta,\varepsilon) \step{a} (t,\delta \zeta,\delta')$ for all $\zeta \in {\cal D}^{*}$;
   \item if  $s \xrightarrow{a[(*,*) / (\delta,\delta')]} t$ then $(s,\zeta,\zeta') \step{a} (t,\delta\zeta,\delta'\zeta')$ for all $\zeta,\zeta' \in {\cal D}^{*}$;
 \item if $s \xrightarrow{a[(d,\varepsilon) / (\delta,\delta')]} t$ then $(s,\zeta d,\varepsilon) \step{a} (t,\delta \zeta,\delta')$ for all $\zeta \in {\cal D}^{*}$;
  \item if $s \xrightarrow{a[(\varepsilon,d) / (\delta,\delta')]} t$ then $(s,\varepsilon,\zeta d) \step{a} (t,\delta,\delta'\zeta)$  for all $\zeta \in {\cal D}^{*}$;
   \item if $s \xrightarrow{a[(d,*) / (\delta,\delta')]} t$ then $(s,\zeta d,\zeta') \step{a} (t,\delta \zeta,\delta'\zeta')$  for all $\zeta,\zeta' \in {\cal D}^{*}$;
  \item if $s \xrightarrow{a[(*,d) / (\delta,\delta')]} t$ then $(s,\zeta,\zeta' d) \step{a} (t,\delta\zeta,\delta'\zeta')$ for all $\zeta,\zeta' \in {\cal D}^{*}$;
   \item if $s \xrightarrow{a[(d,e) / (\delta,\delta')]} t$ then $(s,\zeta d,\zeta' e) \step{a} (t,\delta \zeta,\delta'\zeta')$for all $\zeta,\zeta' \in {\cal D}^{*}$.
\end{enumerate}

\begin{thm} \label{2qtoq}
Let $Q$ be a queue automaton with two queues. Then there is a queue automaton with one queue $M$ such that the process graphs of $Q$ and $M$ are branching bisimilar.
\end{thm}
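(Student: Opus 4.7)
The plan is to encode $Q$'s two queues as a single queue in $M$ using two fresh separator markers, and to simulate each transition of $Q$ by a gadget consisting of $\tau$-rotations around a single action step. As a preparatory step, we apply a two-queue analog of Lemma~\ref{singenqsepdeq} to $Q$ so that every transition performs at most a singleton enqueue on one queue, or a separate dequeue or emptiness-check on one queue, leaving the other queue unchanged (or also checked for emptiness).

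We construct $M$ over the data alphabet $\mathcal{D} \cup \{\$_1, \$_2\}$, where $\$_1, \$_2 \notin \mathcal{D}$ are fresh. The intended branching bisimulation will relate $(s, \delta_1, \delta_2)$ in the process graph of $Q$ to $(s, \sigma)$ in the process graph of $M$ whenever $\sigma$ is a cyclic rotation of $\$_1 \delta_1 \$_2 \delta_2$; the initial state of $M$ reaches $(\uparrow, \$_1 \$_2)$ via a short $\tau$-prefix that enqueues the two markers. At each $Q$-state in $M$ we install $\tau[d/d]$ self-loops for every $d \in \mathcal{D} \cup \{\$_1, \$_2\}$, allowing the single queue to rotate freely without leaving the $Q$-state; these rotations preserve the bisimulation class, since all cyclic rotations of $\$_1 \delta_1 \$_2 \delta_2$ encode the same pair $(\delta_1, \delta_2)$. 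For each $Q$-transition $s \xrightarrow{a[\ldots]} t$ we attach a gadget from $s$ to $t$ in $M$ whose commit step is labelled with $a$, followed by a chain of fresh auxiliary states that perform the needed $\tau$-rotations and marker re-enqueueings to re-establish the encoded invariant at $t$. Each auxiliary state is designed to have a unique outgoing transition determined by the current head of the queue, making all of the gadget's $\tau$-steps inert by the reasoning used in the proof of Lemma~\ref{stardollar}.

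The hard part is designing the commit action of each gadget so that it corresponds unambiguously to the simulated $Q$-transition. A label of the form $a[d/\delta]$ by itself cannot distinguish, for instance, dequeueing $d$ from queue $1$ from dequeueing $d$ from queue $2$, since the same data element may appear in both $\delta_1$ and $\delta_2$. The plan is to let each gadget's commit step enqueue a marker $\$_1$ or $\$_2$ together with whatever data the transition requires, so that the auxiliary $\tau$-chain can subsequently verify, via its head-conditions, that the rotation at commit time was the intended one; the unique-outgoing discipline on auxiliary states then forces continuation only along the consistent branch. We then verify the branching bisimulation clauses by a case analysis on the simple transition types (singleton enqueue, and separate dequeue or emptiness-check, on each of the two queues), using the inertness of the $\tau$-chains and the invariance of the candidate relation under the rotation self-loops.
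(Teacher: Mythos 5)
Your overall strategy --- encode the two queue contents in a single queue separated by fresh markers, simulate each transition of $Q$ by a small gadget of $\tau$-rotations around one observable step, and argue that all added $\tau$-steps are inert --- is the same as the paper's. The differences (two separators and a rotation-invariant relation instead of one separator $\between$ plus a bookmark $\$$, and a preliminary normalisation to singleton enqueues, which you assert but do not prove for two queues) are largely cosmetic. However, there is a genuine gap at precisely the place you flag as ``the hard part'': you put the $a$-labelled commit step \emph{first} and the verifying $\tau$-chain \emph{after} it. In a branching bisimulation, every $a$-transition of $M$ from a state related to $(s,\delta_1,\delta_2)$ must be matched by $Q$; but your commit step is guarded only by the head symbol of the single queue, and, as you note yourself, the same symbol $d$ may occur in $\delta_2$, in the middle of $\delta_1$, or at the head of $\delta_1$, depending on the current rotation. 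Hence $M$ can fire $a$ from a rotation in which the dequeued symbol is not the head of the intended queue, or when an emptiness precondition is violated (a condition that a single head-test cannot check at all in a rotation-invariant encoding, since it amounts to $\$_1$ being cyclically adjacent to $\$_2$). Post-commit verification cannot repair this: letting the inconsistent continuation deadlock in an auxiliary state leaves $M$ with an $a$-transition into a non-final, stuck (or corrupted) state that $(s,\delta_1,\delta_2)$ cannot match, so the candidate relation is not a branching bisimulation.

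The fix, and this is exactly how the paper's gadgets are organised, is to reverse the order: all locating of the head of the first queue, all head-symbol tests, and all emptiness checks must be performed by $\tau$-labelled steps \emph{before} the unique $a$-labelled step of the gadget, and every such silent ``attempt'' must be accompanied by explicit \emph{failing} $\tau$-transitions that restore the queue contents and return to (a state branching bisimilar to) $s$ when a check fails. Only then is the $a$-step enabled exactly when the simulated transition of $Q$ is enabled, and only then are all added $\tau$-steps inert. Compare the paper's treatment of $s \xrightarrow{a[(d,e)/(\delta,\zeta)]} t$: the gadget first silently consumes the head $e$ of the second queue, silently rotates to and past the separator so that the head of the first queue is exposed, and only at that point offers the $a$-step guarded by $d$; a wrong head symbol is caught by a failing $\tau$-path back to $s$. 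You should restructure every gadget into this check-then-commit form, and also state and prove the two-queue analogue of Lemma~\ref{singenqsepdeq} if you wish to rely on it.
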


\begin{proof}
Let $Q$ be a queue automaton with two queues. We define the queue automaton with one queue $M$ in the following. As in the proof of Lemma~\ref{stardollar}, we use an extra data element $\$$ to be able to traverse the contents of a queue. A state in $Q$ will contain the contents of two queues, two sequences $\delta$ and $\zeta$. In $M$, this will be encoded as the contents of one queue: $\delta\between\zeta$, where extra data element $\between$ is the separator between the two sequences.
A state $(s,\delta,\zeta)$ in $Q$ will correspond to the state $(s,\delta\between\zeta)$ in $M$. Enumerate all transitions of $Q$ in a sequence $u_1, \ldots, u_K$. For each transition $u_k$ we add four new states $s^1_k,s^2_k,s^3_k,s^4_k$ in $M$, moreover, there is a new initial state $\uparrow'$ in $M$. The further elements are as follows.
\begin{enumerate}
\item $M$ has the same action set and the same final states as $Q$;
\item $M$ has the data set $\cal D$ of $Q$ and in addition the symbols $\between,\$$;
\item $M$ has initial state $\uparrow'$ and a transition labelled $\tau[\varepsilon / \between]$ to the initial state of $Q$;
\item if transition $s \xrightarrow{a[(*,*) / (\delta,\zeta)]} t$ is transition $u_k$ in $Q$, then there are transitions $s \xrightarrow{a[* /  \$]} s^1_k \xrightarrow{\tau[\between / \between \zeta]} s^2_k \xrightarrow{\tau[\$ / \delta]} t$ in $M$, and for all $d \in {\cal D}$ transitions  $s^1_k \xrightarrow{\tau[d/d]} s^1_k$ and $s^2_k \xrightarrow{\tau[d/d]} s^2_k$; note that in this case, in the process graph of $M$, there is always an $a$-step possible, irrespective of the contents of the queues in the state;
\item if transition $s \xrightarrow{a[(*,\varepsilon) / (\delta,\zeta)]} t$ is transition $u_k$ in $Q$, then there are transitions $s \xrightarrow{a[\between / \between \zeta \$]} s_k^1 \xrightarrow{\tau[\$ / \delta]} t$ and for all $d \in {\cal D}$ transitions  $s_k^1 \xrightarrow{\tau[d / d]} s_k^1$ in $M$; in this case, in the process graph of $M$, there is only an $a$-step possible if the second queue is empty, which can be checked by seeing $\between$ at the head of the queue;
\item if transition $s \xrightarrow{a[(\varepsilon,\varepsilon) / (\delta,\zeta)]} t$  is transition $u_k$ in $Q$, then in $M$, there is only an $a$-labeled transition if both queues are empty. We can check whether the second queue is empty by seeing $\between$ at the head of the queue, for the first queue we have to traverse the queue. The succeeding transitions in $M$ (i.e., the transitions that correspond to the situation that the checks that both queues are empty are successful) are $s \xrightarrow{\tau[\between / \between \$]} s^1_k \xrightarrow{\tau[\$ / \varepsilon]} s^2_k  \xrightarrow{a[\between / \delta\between \zeta]} t$ and the failing transitions (i.e., the transitions that correspond to the situation that the check revealed that one of the queues is not empty) are, for all $d \in {\cal D}$, $s^1_k \xrightarrow{\tau[d/d]} s^3_k$ and $s^3_k \xrightarrow{\tau[d/d]} s^3_k$ and, finally, $s^3_k \xrightarrow{\tau[\$/\varepsilon]} s$;
\item if transition $s \xrightarrow{a[(\varepsilon,*) / (\delta,\zeta)]} t$ is transition $u_k$ in $Q$, then again, we need to check whether the first queue is empty. In $M$, we add the succeeding transitions
 $s \xrightarrow{\tau[*/\$]} s^1_k \xrightarrow{\tau[\between / \varepsilon]} s^2_k \xrightarrow{a[\$ / \delta\between\zeta]} t$ and for all $d \in {\cal D}$ transitions  $s^1_k \xrightarrow{\tau[d / d]} s^1_k$; moreover, we add in $M$ the failing transitions $s^3_k \xrightarrow{\tau[\$ / \varepsilon]} s$ and for all $d \in {\cal D}$ transitions  $s^2_k \xrightarrow{\tau[d/d \between]} s^3_k$ and $s^3_k \xrightarrow{\tau[d / d]} s^3_k$;
  \item if transition $s \xrightarrow{a[(*,d) / (\delta,\zeta)]} t$ is transition $u_k$ in $Q$, then there are transitions $s \xrightarrow{a[d/\$]} s_k^1 \xrightarrow{\tau[\between / \between \zeta]} s_k^2 \xrightarrow{\tau[\$ / \delta]} t$ and for all $f \in {\cal D}$ transitions   $s_k^1 \xrightarrow{\tau[f/ f]} s_k^1$ and $s_k^2 \xrightarrow{\tau[f / f]} s_k^2$ in $M$;
\item if transition $s \xrightarrow{a[(\varepsilon,d) / (\delta,\zeta)]} t$ is transition $u_k$ in $Q$, then, in $M$, then there are succeeding transitions $s \xrightarrow{\tau[d/\$]} s_k^1 \xrightarrow{\tau[\between / \varepsilon]} s_k^2 \xrightarrow{a[\$ / \delta \between \zeta]} t$ and for all $f \in {\cal D}$ transitions  $s_k^1 \xrightarrow{\tau[f / f]} s_k^1$; moreover, there are failing transitions $s^3_k \xrightarrow{\tau[\$ / \varepsilon]} s$ and, for all $f \in {\cal D}$, $s^2_k \xrightarrow{\tau[f / f \between]} s^3_k$ and $s^3_k \xrightarrow{\tau[f/f]} s^3_k$;
\item if transition $s \xrightarrow{a[(d,\varepsilon) / (\delta,\zeta)]} t$ is transition $u_k$ in $Q$, then, in $M$, there are succeeding transitions $s \xrightarrow{\tau[\between / \$]} s_k^1 \xrightarrow{a[d/\between \zeta]} s_k^2 \xrightarrow{\tau[\$ / \delta]} t$ and for all $f \in {\cal D}$ transitions  $s_k^2 \xrightarrow{\tau[f/ f]} s_k^2$; moreover, there are failing transitions $s^1_k \xrightarrow{\tau[\$/\varepsilon]} s$ and $s^3_k \xrightarrow{\tau[\$/\varepsilon]} s$ and, for all $e \in {\cal D}, e \neq d$, $s^1_k \xrightarrow{\tau[e/e\between]} s^3_k$ and, for all $f \in {\cal D}$, $s^3_k \xrightarrow{\tau[f/f]} s^3_k$;
\item if transition $s \xrightarrow{a[(d,*) / (\delta,\zeta)]} t$ is transition $u_k$ in $Q$, then, in $M$, there are succeeding transitions $s \xrightarrow{\tau[*/\$]} s_k^1 \xrightarrow{\tau[\between / \varepsilon]} s_k^2 \xrightarrow{a[d/\between \zeta]} s_k^3 \xrightarrow{\tau[\$ / \delta]} t$ and for all $f \in {\cal D}$ transitions  $s_k^1 \xrightarrow{\tau[f/ f]} s_k^1$ and $s_k^3 \xrightarrow{\tau[f/ f]} s_k^3$; moreover, there are failing transitions $s^1_k \xrightarrow{\tau[\$/\varepsilon]} s$ and $s^4_k \xrightarrow{\tau[\$/\varepsilon]} s$ and, for all $e \in {\cal D}, e \neq d$, $s^2_k \xrightarrow{\tau[e/e\between]} s^4_k$ and, for all $f \in {\cal D}$, $s^4_k \xrightarrow{\tau[f/f]} s^4_k$;
\item if transition $s \xrightarrow{a[(d,e) / (\delta,\zeta)]} t$ is transition $u_k$ in $Q$, then, in $M$, there are succeeding transitions $s \xrightarrow{\tau[e/\$]} s_k^1 \xrightarrow{\tau[\between / \varepsilon ]} s_k^2 \xrightarrow{a[d/\between \zeta]} s_k^3 \xrightarrow{\tau[\$ / \delta]} t$ and for all $f \in {\cal D}$ transitions  $s_k^1 \xrightarrow{\tau[f/ f]} s_k^1$ and $s_k^3 \xrightarrow{\tau[f/ f]} s_k^3$; moreover, there are failing transitions $s^2_k \xrightarrow{\tau[\$/\varepsilon]} s$ and $s^4_k \xrightarrow{\tau[\$/\varepsilon]} s$ and, for all $g \in {\cal D}, g \neq d$, $s^2_k \xrightarrow{\tau[g/g\between]} s^4_k$ and, for all $f \in {\cal D}$, $s^4_k \xrightarrow{\tau[f/f]} s^4_k$.
\end{enumerate}
Finally, it is straightforward to check that in the process graph of $M$, all the extra $\tau$-steps are inert.
\end{proof}

We see that the set of processes and the set of languages given by a queue automaton is not enlarged by adding another queue memory. Further on, we use this to show that the the composition of two interacting processes both given by queue automata is again given by a queue automaton.

 \section{Comparison with Reactive Turing Machines} \label{rtm}

We use the definition of the Reactive Turing Machine (RTM) from \cite{BLT13}.

 \begin{defi}[Reactive Turing machine]
 A \emph{reactive Turing machine} $M$ is a sixtuple $({\cal S},{\cal A},{\cal D},\rightarrow,\uparrow,\downarrow)$ where:
 \begin{enumerate}
 \item $\cal S$ is a non-empty finite set of states;
 \item $\cal A$ is a non-empty finite action alphabet;
 \item $\cal D$ is a non-empty finite data alphabet;
 \item $\mathalpha{\rightarrow} \subseteq {\cal S} \times  {\cal A}_{\tau} \times ({\cal D} \cup \{\Box\}) \times ({\cal D} \cup \{\Box\}) \times \{L,R\} \times {\cal S}$ is a finite set of \emph{transitions} or \emph{steps};
 \item $\mathalpha{\uparrow} \in {\cal S}$ is the initial state;
 \item $\mathalpha{\downarrow} \subseteq {\cal S}$ is the set of final states.
 \end{enumerate}
 The blank $\Box$ represents an empty tape cell. Henceforth, we will write ${\cal D}_{\Box}$ instead of ${\cal D} \cup \{\Box\}$.
 If $(s,a,d,e,T,t) \in \mathalpha{\rightarrow}$, we write $s \xrightarrow{a[d / e]T} t$, and this means that the machine, when it is in state $s$ and $d$ is the data element read by the tape head, 
 can execute action $a$, replace $d$ by $e$, can move one position left ($L$) or right ($R$) and end up in state $t$. 
 \end{defi}
 
 It requires quite some notational overhead to define the transition relation and the process graph associated to an RTM. The states of the process graph are the configurations of the RTM, consisting of a state of the RTM, the contents of the tape, and the position of the read/write head on the tape. We represent the tape contents by an element of $({\cal D}_{\Box})^{*}$, replacing exactly one occurrence of a tape symbol $d$ by a \emph{marked} symbol $\check{d}$, indicating that the read/write head is on this symbol. We denote by $\check{\cal D}_{\Box} = \{\check{d} \mid d \in {\cal D}_{\Box}\}$ the set of marked tape symbols; a \emph{tape instance} is a sequence $\delta \in ({\cal D}_{\Box} \cup \check{\cal D}_{\Box})^{*}$ containing exactly one element of $\check{\cal D}_{\Box}$. Note that we do not use $\delta$ exclusively for tape instances; we also use $\delta$ for sequences over ${\cal D}$. A tape instance thus is a finite sequence of symbols that represents the contents of a two-way infinite tape. Henceforth, we do not distinguish between tape instances that are equal modulo the addition or removal of extra occurrences of the blank symbol $\Box$ at the left or right extremes of the sequence. That is, we do not distinguish tape instances $\delta$ and $\zeta$ if ${\Box}^{\omega}\delta{\Box}^{\omega} = {\Box}^{\omega}\zeta{\Box}^{\omega}$.
 
A \emph{configuration} of an RTM ${M}$ is a pair $(s,\delta)$ where $s \in {\cal S}$ is a state of the RTM and $\delta$ is a tape instance.
   
 We define an ${\cal A}_{\tau}$-labelled transition system for each RTM such that a transition  $s \xrightarrow{a[d / e]T} t$ corresponds to a transition $(s,\delta) \step{a} (t,\zeta)$, where in $\delta$ some occurrence of $d$ is marked, and in $\zeta$ this marked $d$ is replaced by $e$, and the symbol to the left in $\zeta$ is marked (if $T = L$) or the symbol to the right in $\zeta$ is marked (if $T = R$). If necessary, a blank $\Box$ is added.
 For this, we use the following notation: if $\delta \in {\cal D}_{\Box}^{*}$, then ${\delta}^{<} = \check{\Box}$ if $\delta = \varepsilon$ and ${\delta}^{<} = \zeta \check{d}$ if $\delta = \zeta d$ for some $d \in {\cal D}_{\Box}, \zeta \in {\cal D}_{\Box}^{*}$. Likewise, if $\delta \in {\cal D}_{\Box}^{*}$, then ${}^{>}{\delta} = \check{\Box}$ if $\delta = \varepsilon$ and ${}^{>}{\delta} = \check{d} \zeta$ if $\delta = d \zeta$ for some $d \in {\cal D}_{\Box}, \zeta \in {\cal D}_{\Box}^{*}$.
 
 \begin{defi}
 Let ${M} = ({\cal S},{\cal A},{\cal D},\rightarrow,\uparrow,\downarrow)$ be an RTM. The \emph{process graph} \emph{associated with} ${M}$ is defined as follows:
 \begin{enumerate}
 \item the set of states is the set of configurations $(s,\delta)$ of $M$;
 \item the set of actions is $\cal A$;
 \item the transition relation $\mathalpha{\rightarrow}$ is the least relation satisfying, for all $a \in {\cal A}_{\tau}, d,e \in {\cal D}_{\Box}, \delta, \zeta \in {\cal D}_{\Box}^{*}$:
 \[ (s, \delta \check{d} \zeta) \step{a} (t, \delta^{<} e \zeta) \Longleftrightarrow s \xrightarrow{a[d/e]L} t \]
 and
 \[ (s, \delta \check{d} \zeta) \step{a} (t, \delta e^{>} \zeta) \Longleftrightarrow s \xrightarrow{a[d/e]R} t\]
 \item the initial state is $(\uparrow, \check{\Box})$;
 \item the set of final states is $\{(s,\delta) \mid s \downarrow\}$.
 \end{enumerate}
 A process is \emph{executable} iff its branching bisimulation equivalence class of process graphs contains a process graph associated with an RTM. A language is \emph{computable} iff it is the language of a process graph associated with an RTM. 
 \end{defi}
 
 \begin{thm}
 A process is executable if and only if it is the process of a queue automaton.
 \end{thm}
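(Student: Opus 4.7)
The plan is to establish the theorem by exhibiting simulations in both directions, each preserving branching bisimilarity. For the direction from queue automaton to RTM, I would start from a queue automaton $Q$ and, applying Lemma~\ref{singenqsepdeq}, reduce to the case where $Q$ has only singleton enqueues and separate dequeues. I would then construct an RTM $M$ encoding the queue contents on a segment of its tape, delimited by two dynamically maintained markers for the queue's head and tail. In main states, the tape head sits at one of the markers. Each transition of $Q$ would be simulated by a short deterministic chain of RTM steps: exactly one step carries the action label $a$, while the remaining auxiliary steps, labeled $\tau$, read or write the appropriate symbols and reposition the tape head between the markers. Following the pattern of Lemmas~\ref{stardollar} and~\ref{singenqsepdeq}, these $\tau$-transitions would be inert because each intermediate source state has a unique outgoing transition.

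For the converse, I would first build a queue automaton $M'$ with two queues that simulates a given RTM $M$, and then invoke Theorem~\ref{2qtoq} to collapse $M'$ to a single-queue automaton branching bisimilar to $M$. The two queues of $M'$ would encode the portions of the tape to the left and to the right of the head, via the classical simulation of a Turing tape by two stacks, combined with the standard rotation trick for simulating a stack by a queue: to push an element $x$ to the front of a queue holding $[a_1,\dots,a_n]$, one enqueues a fresh marker $\$$ and then $x$, rotates the queue (dequeue from the front, re-enqueue at the back) until the marker surfaces at the front, and finally consumes the marker, leaving $[x, a_1, \dots, a_n]$; pops remain simple dequeues. The tape symbol currently under the head would be stored in the finite control of $M'$, so each step of $M$ translates to a finite deterministic sequence of queue operations, exactly one of which carries the action label of the simulated transition.

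The main obstacle in both directions is verifying that the many auxiliary $\tau$-transitions introduced by the simulations are inert, so that branching bisimilarity (and not merely language equivalence) is preserved. I would argue that every intermediate state produced by either simulation lies in a deterministic $\tau$-chain: the current queue front (or tape content) determines a unique next transition, termination is disabled at intermediate states, and the chain deterministically reaches a main state representing a configuration of the simulated machine. Under these conditions the source and target of each added $\tau$-transition exhibit matching observable behaviour, extending the inertness arguments already used in the proofs of Lemmas~\ref{stardollar} and~\ref{singenqsepdeq} and of Theorem~\ref{2qtoq}.
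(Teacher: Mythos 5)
Your proposal is correct, and one of its two directions follows a genuinely different route from the paper. For the direction from a queue automaton to an RTM you do essentially what the paper does: assume singleton enqueues and separate dequeues via Lemma~\ref{singenqsepdeq}, lay the queue contents out as a contiguous tape segment, dequeue at the marked end, and enqueue by a $\tau$-labelled traversal to the far end and back; the paper uses the surrounding blanks as delimiters where you posit explicit markers, but that is cosmetic. For the converse direction, however, the paper constructs a single-queue automaton directly, encoding the RTM configuration $(s,\delta\check{d}\zeta)$ as the queue content $\zeta^R\between\delta d$ and rotating past the separator $\between$ with a bookmark $\$$ to realise head moves. You instead factor this through an intermediate two-queue automaton (two stacks for the two halves of the tape, each stack realised on a queue by the enqueue-marker-and-rotate trick, with the scanned symbol kept in the finite control) and then appeal to Theorem~\ref{2qtoq} to collapse to one queue. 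Your modular route buys reuse of Theorem~\ref{2qtoq} and a cleaner conceptual separation (tape-to-two-stacks, stack-to-queue, two-queues-to-one), at the cost of composing two layers of simulation and hence two layers of inertness arguments; the paper's direct construction is essentially your composite encoding inlined, since its separator $\between$ plays exactly the role of the two-queue separator in the proof of Theorem~\ref{2qtoq}. Your inertness argument --- every auxiliary state is non-final and has a unique outgoing $\tau$-transition determined by the queue front or tape cell, and every $\tau$-chain is finite --- is the same justification the paper uses in Lemmas~\ref{stardollar} and~\ref{singenqsepdeq} and in Theorem~\ref{2qtoq}, and it suffices here as well.
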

 \begin{proof}
 Let ${M} = ({\cal S},{\cal A},{\cal D},\rightarrow,\uparrow,\downarrow)$ be an RTM, and suppose $M$ has $n$ transitions $T_1, \ldots, T_n$. For each transition $T_i$, we have 5 new states $s_i^1, s_i^2, s_i^3, s_i^4, s_i^5$.
 In addition, we have an extra state $\uparrow'$.
 A configuration $(s, \delta \check{d} \zeta)$ of the RTM will correspond to the state $(s, \zeta^R \between \delta d)$ of the queue automaton to be constructed, where the symbol $\between$ is a separator and $\zeta^R$ is the reverse of the string $\zeta$. Note that we have to treat $\Box$ as an extra data element, because the RTM uses blanks in this way.
 Now we define the queue automaton $Q$ as follows:
 \begin{enumerate}
 \item the set of states is $\cal S$, extended with new states $\{s_i^1, s_i^2, s_i^3, s_i^4, s_i^5 \mid i \leq n\} \cup \{\uparrow'\}$;
 \item the set of actions is $\cal A$;
 \item the set of data is ${\cal D} \cup \{\Box,\between,\$\}$ ($\$$ is a new data element used as a bookmark);
 \item if $s \xrightarrow{a[d/e]L} t$ is the transition $T_i$ in $M$, then there are transitions $s \xrightarrow{a[d/\$]} s^1_i \xrightarrow{\tau[\between/e \between \Box]} s_i^2 \xrightarrow{\tau[\$ / \varepsilon]} t$ and, for each $f \in {\cal D}_{\Box}$, transitions $s^1_i \xrightarrow{\tau[f/f]} s^1_i$ and $s^2_i \xrightarrow{\tau[f/f]} s^2_i$ in $Q$; moreover, if $s \xrightarrow{a[d/e]R} t$ is transition $T_j$ in $M$, 
 then there are transitions $s \xrightarrow{a[d/\$ e]} s^1_j \xrightarrow{\tau[\between/\between]} s^2_j  \xrightarrow{\tau[\$ / \Box \$]} s^4_j \xrightarrow{\tau[\$/\varepsilon]} t$ and, for each $f,g \in {\cal D}_{\Box}$, transitions
 $s^2_j \xrightarrow{\tau[g/\varepsilon]} s^3_j \xrightarrow{\tau[\$ / g\$]} s^4_j$, $s^1_j \xrightarrow{\tau[f/f]} s^1_j$, $s^3_i \xrightarrow{\tau[f/f]} s^3_i$ and $s^4_j \xrightarrow{\tau[f/f]} s^4_j$ in $Q$.
 \item the initial state is $\uparrow$ and add a transition $\uparrow \xrightarrow{\tau[\varepsilon/\between \Box]} \uparrow'$ in $Q$.
 \item the set of final states is $\downarrow$.
 \end{enumerate}
 Then, we can establish that the process graph of $Q$ is branching bisimilar to the process graph associated with $M$.
 
 For the other direction, suppose a process has a process graph given by a queue automaton ${Q} = ({\cal S},{\cal A},{\cal D},\rightarrow,\uparrow,\downarrow)$. Without loss of generality, we can suppose $Q$ has only singleton enqueues and separate dequeues. Suppose $Q$ has $n$ transitions $T_1, \ldots, T_n$. For each transition $T_i$, we have a new state $s'_i$.
 A state $(s, \delta d)$ of the queue automaton will correspond to the configuration $(s, \Box \delta \check{d} \Box)$ of the RTM to be constructed, and state $(s, \varepsilon)$ to configuration $(s, \check{\Box})$.
 Now define an RTM $M$ as follows:
 \begin{enumerate}
 \item the set of states is $\cal S$, extended with new states $\{s'_i,s''_i \mid i \leq n\}$;
 \item the set of actions is $\cal A$, the set of data is $\cal D$;
 \item if $s \xrightarrow{a[d/\varepsilon]} t$ is a transition in $Q$, then $s \xrightarrow{a[d/ \Box] L} t$ is a transition in $M$; further, if $s \xrightarrow{a[\varepsilon / \varepsilon]} t$ is a transition in $Q$, then $s \xrightarrow{a[\Box  / \Box]L} t$ is a transition in $M$; lastly, if  $s \xrightarrow{a[* / d]} t$ is the transition $T_i$ in $Q$, then $s \xrightarrow{a[\Box / d]R} s'_i \xrightarrow{\tau[\Box / \Box]L} t$ are transitions in $M$ and, for each $e \in {\cal D}$, $s \xrightarrow{a[e/e]L} s''_i$ and $s''_i \xrightarrow{\tau[e/e]L} s''_i$ and $s'_i \xrightarrow{\tau[e/e]R} s'_i$are also transitions in $M$; finally, also $s''_i \xrightarrow{\tau[\Box / d]R} s'_i$ is a transition in $M$;
 \item the initial state is $\uparrow$, and the set of final states is $\downarrow$.
 \end{enumerate}
 Again, we can establish that the process graph of $M$ is branching bisimilar to the process graph of $Q$.
 \end{proof}
 
 \begin{cor}
 A process is executable if and only if it is the process of a queue automaton. A language is computable if and only if it is the language of a queue automaton.
 \end{cor}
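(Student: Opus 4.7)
The first biconditional is just a restatement of the theorem preceding the corollary, so nothing needs to be done for it. For the second biconditional, my plan is to reduce it to a general fact about branching bisimilarity, namely that branching bisimilar process graphs accept the same language. Since the constructions in the proof of the theorem already establish, in both directions, that the process graph of the constructed queue automaton and the process graph associated with the RTM are branching bisimilar, the equality of the accepted languages would follow at once.

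The key intermediate step is therefore to justify: if two process graphs $G_1$ and $G_2$ with initial states $s_1$ and $s_2$ satisfy $s_1 \bbisim s_2$, then they accept the same language. The argument is routine and does not need to be written out in full: given a computation $s_1 \steps{w} t_1$ with $t_1{\downarrow}$, one repeatedly uses the transfer property of branching bisimulation to build a matching computation $s_2 \steps{w} t_2$, and then uses the termination clause to obtain $t_2 \steps{\varepsilon} t_2'$ with $t_2'{\downarrow}$, which still yields $s_2 \steps{w} t_2'$. Hence $w \in \mathcal{L}(G_2)$, and symmetrically.

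Applying this observation to the theorem: if $L$ is computable, then $L = \mathcal{L}(M)$ for some RTM $M$, and the first construction in the proof of the theorem yields a queue automaton $Q$ whose process graph is branching bisimilar to that of $M$; so $\mathcal{L}(Q) = L$. Conversely, if $L = \mathcal{L}(Q)$ for some queue automaton $Q$, then the second construction in the proof yields an RTM $M$ whose process graph is branching bisimilar to that of $Q$; so $\mathcal{L}(M) = L$, i.e.\ $L$ is computable.

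I expect no real obstacle here, since the heavy lifting has already been done in the theorem. The only subtlety worth flagging is to make sure the language clause of the proof of the theorem is satisfied by the constructions (in particular that the added intermediate states are not accepting and that accepting states line up via the bisimulation), but both constructions keep $\downarrow$ as in the source machine, and the added states are not accepting, so this is immediate.
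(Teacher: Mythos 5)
Your proposal is correct and matches the paper's (implicit) reasoning: the paper states the corollary without a separate proof, relying exactly on the fact that the constructions in the preceding theorem produce branching bisimilar process graphs and that branching bisimilarity preserves the accepted language. Your sketch of that language-preservation lemma (iterating the transfer property and then using the termination clause) is the standard and correct justification.
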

 
 A Turing machine can also be used to define when a function is computable. In \cite{BLT13}, we defined this also for the Reactive Turing Machine. Here, we do this for the queue automaton, and give a couple of examples. We designate an input port $i$ and an output port $o$.
 
 \begin{defi}
We say a queue automaton performs a \emph{computation}\index{computation} if
\begin{enumerate}
\item the process graph of the queue automaton is deterministic;
\item every string in the language of the queue automaton consists of a sequence of inputs $i?d_1 \cdots i?d_n$ and a sequence of outputs $o!e_1 \cdots o!e_m$ interleaved (so not necessarily all outputs after all inputs) for some $n,m\geq 0, d_i, e_j \in \cal D$. Note that the length of input and output may differ.
\end{enumerate}
 
In this case, we say the queue automaton \emph{computes} the function $f$ on a domain of data strings $D$ ($D \subseteq {\cal D}^{*}$) if for all input $w \in D$ it has output $f(w) \in {\cal D}^{*}$. 
 \end{defi}
 
 \begin{exa}
 By an adaptation of Example~\ref{exaww}, we can define the function $f(w) = ww$ on $\mathcal{D}^{*}$. See Figure~\ref{fig:fwww}.
 \end{exa}
 
 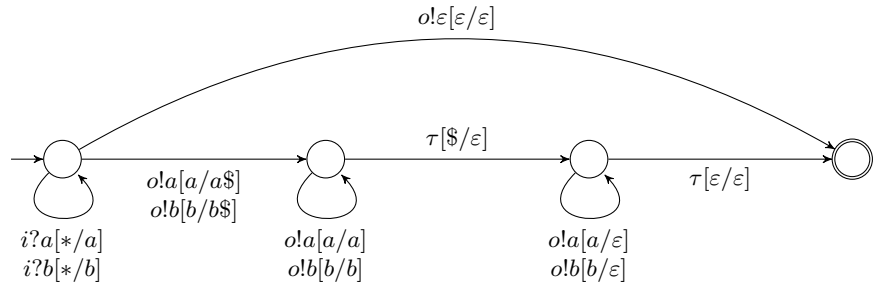
\begin{figure}[htb]
\begin{center}
\begin{tikzpicture}[->,>=stealth',node distance=3.5cm, node font=\footnotesize, state/.style={circle, draw, minimum size=.5cm,inner sep=0pt}]
  \node[state,initial,initial text={},initial where=left] (s0) {};
  \node[state] [right of=s0] (s1) {};
  \node[state] [right of=s1] (s2) {};
  \node[state,accepting] [right of=s2] (s3) {};
  
  \path[->]
  (s0) edge[in=315,out=225,loop]
         node[below] {$\begin{array}{c}i?a[*/a]\\
                                      i?b[*/b] \end{array}$} (s0)
  (s0) edge node[below] {$\begin{array}{c}o!a[a/a\$]\\
                                      o!b[b/b\$]  \end{array}$} (s1)
 (s1) edge[in=315,out=225,loop]
         node[below] {$\begin{array}{c}o!a[a/a]\\
                                      o!b[b/b] \end{array}$} (s1)
  (s1) edge node[above] {$\tau[\$/\varepsilon]$} (s2)
  (s2) edge[in=315,out=225,loop]
  	node[below] {$\begin{array}{c}o!a[a/\varepsilon]\\
				o!b[b/\varepsilon] \end{array}$} (s2)
(s2) edge node[below] {$\tau[\varepsilon/\varepsilon]$} (s3);         

\path[->]
(s0) edge[bend left] node[above] {$o!\varepsilon[\varepsilon/\varepsilon]$} (s3);                           
\end{tikzpicture}
\end{center}
\caption{Queue automaton for the function $f(w) = ww$ on $\{a,b\}^{*}$.}\label{fig:fwww}
\end{figure}

\begin{exa}
There is a queue automaton that compares quantities. Suppose we have two numbers in binary notation and we want to know whether the first number is larger than the second. For simplicity, we assume the numbers have an equal number of digits, adding leading zeroes if necessary. The input consists of the two numbers separated by a $>$-sign, and ${\cal D} = \{0,1,>,\mathit{yes},\mathit{no} \}$.  Figure~\ref{fig:comp} explains the rest. Note that no $\tau$-steps are needed.

Thus, a queue automaton can be used to program a conditional branching in a program. In the same way, we can find queue automata for other program constructs, and other mathematical functions. Notice that there is a clear separation between input and output on the one hand, and memory use on the other hand. In this case, we only need to store the first number in memory, not the second one. Also notice that as soon as we have a difference between the two numbers, we can determine the output, and there is no need for the rest of the input.
\end{exa}

 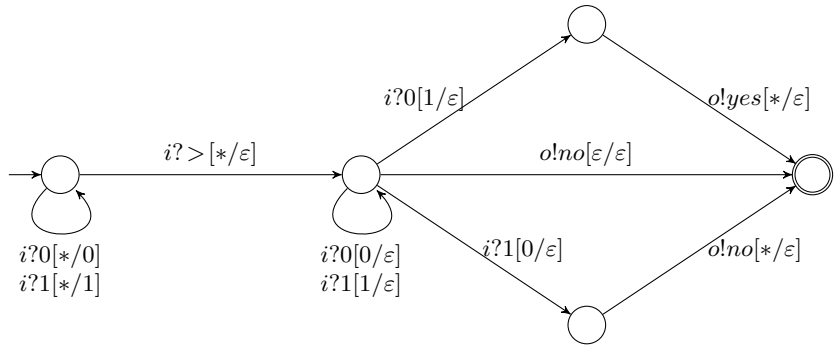
\begin{figure}[htb]
\begin{center}
\begin{tikzpicture}[->,>=stealth',node distance=4cm, node font=\footnotesize, state/.style={circle, draw, minimum size=.5cm,inner sep=0pt}]
  \node[state,initial,initial text={},initial where=left] at (1,2) (s0) {};
  \node[state] at (5,2) (s1) {};
  \node[state] at (8,0) (s2) {};
  \node[state] at (8,4) (s4) {};
  \node[state,accepting] at (11,2) (s3) {};
  
  \path[->]
  (s1) edge node[above] {$o!no[\varepsilon/\varepsilon]$} (s3);
  
  \path[->]
  (s1) edge node[left] {$i?0[1/\varepsilon]$} (s4)
  (s4) edge node[right] {$o!yes[*/\varepsilon]$} (s3);
  
  \path[->]
  (s0) edge[in=315,out=225,loop]
         node[below] {$\begin{array}{c}i?0[*/0]\\
                                      i?1[*/1] \end{array}$} (s0)
  (s0) edge node[above] {$i?\!>\![*/\varepsilon]$} (s1)
 (s1) edge[in=315,out=225,loop]
         node[below] {$\begin{array}{c}i?0[0/\varepsilon]\\
                                      i?1[1/\varepsilon] \end{array}$} (s1)
  (s1) edge node[right] {$i?1[0/\varepsilon]$} (s2)
 (s2) edge node[right] {$o!no[*/\varepsilon]$} (s3);         

\end{tikzpicture}
\end{center}
\caption{Queue automaton comparing quantities.}\label{fig:comp}
\end{figure}

In the following, we use some basic recursion theory. The reader is refered to, e.g., \cite{Rog67}.
In \cite{BLT13}, we characterised the set of executable processes as follows. We call a process graph \emph{effective} if its transition relation and its set of final states are recursively enumerable (with some suitable encoding of these into natural numbers), see \cite{Bou85}. A process is effective if its branching bisimulation equivalence class contains an effective process graph. We proved in \cite{BLT13} that a process is effective if and only it is executable.
In this result, it is needed to abstract from divergencies (infinite sequences of inert $\tau$-steps). Also note that the process graph of a queue automaton is always boundedly branching, but the minimal element of its branching bisimulation equivalence class can be infinitely branching. This is the case for the queue automaton shown in Figure~\ref{infbranch}.

\begin{figure}[htb]
\begin{center}
\begin{tikzpicture}[->,>=stealth',node distance=4cm, node font=\footnotesize, state/.style={circle, draw, minimum size=.5cm,inner sep=0pt}]
  \node[state,initial,initial text={},initial where=left] (s0) {};
  \node[state] [right of=s0] (s1) {};
  \node[state,accepting] [right of=s1] (s2) {};
  
  \path[->]
  (s0) edge[in=315,out=225,loop]
         node[below] {$\begin{array}{c}\tau[*/1]\\
                                      \tau[1/\varepsilon] \end{array}$} (s0)
  (s0) edge node[above] {$a[1/\varepsilon]$} (s1)
 (s1) edge[in=315,out=225,loop]
         node[below] {$a[1/\varepsilon]$} (s1)
  (s1) edge node[above] {$\tau[\varepsilon/\varepsilon]$} (s2);                                    
\end{tikzpicture}
\end{center}
\caption{Queue automaton, of which the process graph is branching bisimilar to an infinitely branching process graph.}\label{infbranch}
\end{figure}
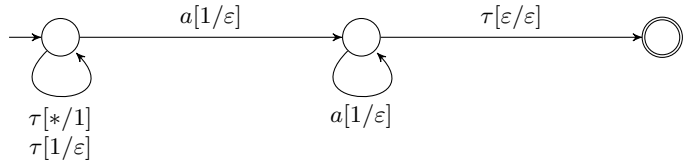


 \section{Process Algebra}
 
 In this section, we express the executable processes in a mathematical formalism. In the theory of automata and formal languages, such a  notation is often called a  \emph{grammar}. In concurrency theory or process theory, such a notation is often called a \emph{process algebra}. In this process algebra, we can concisely express communication between executable processes.
 
 A queue automaton can be thought of as a finite automaton extended with queue memory. In this
section, we shall formalise this idea using process algebra, by proving that every executable process
can be specified as a regular process interacting with the queue process. Thereafter, we will also briefly
consider universal queue automata.
 
 In \cite{BBK85}, it was proven that every executable process can be defined as the solution of a finite recursive specification over the process algebra ACP$_{\tau}$; in \cite{BBR10}, this result was updated to the process algebra BCP$_{\tau}$ with standard communication. We proceed to present this process algebra here.
 
 \begin{defi}
 The syntax of BCP$_{\tau}$ with standard communication has the following ingredients:
 \begin{enumerate}
 \item there is a constant $\dl$ denoting inaction or deadlock: it denotes the finite automaton with a single state that is initial but not final and that has no transitions;
 \item there is a constant $\emp$ denoting termination or acceptance: it denotes the finite automaton with a single state that is initial and final and that has no transitions;
 \item there is a finite set of constants $\cal X$ of process identifiers or process variables; we use capital letters $X,Y, \ldots$ to range over $\cal X$;
 \item there is a set $\cal A$ of actions, a finite set $\cal D$ of data and a finite set $\cal C$ of communication ports: for each $d \in {\cal D}, c \in {\cal C}$ there are actions $c!d$ (send or output $d$ at $c$), $c?d$ (receive or input $d$ at $c$) and $c(d)$ (communicate $d$ at $c$); in addition, there is the unobservable action $\tau$; for each action $a \in {\cal A}_{\tau}$, there is the unary action prefix $a.\_$;
 \item there is the binary operator $+$ denoting choice or alternative composition;
 \item there is the binary operator $\merge$ denoting merge or parallel composition;
 \item for each communication port $c$, there is the unary operator $\encap{c}{\_}$ denoting restriction or encapsulation of all actions $c?d,c!d$ for $d \in {\cal D}$;
  \item there is the unary operator $\abstr{\cal C}{\_}$ denoting hiding or abstraction of all actions $c(d)$ for $d \in {\cal D},c \in {\cal C}$.
 \end{enumerate}
 \end{defi}
 
 A recursive specification over BCP$_{\tau}$ is a mapping
$\Gamma$ from $\cal X$ to the set of BCP$_{\tau}$ expressions. The idea is that the process expression
$p$ associated with a process identifier $X\in{\cal X}$ by $\Gamma$
\emph{defines} the behaviour of $X$. We prefer to think of $\Gamma$ as a
collection of \emph{defining equations}
  $X\defeqn p$,
  exactly one for every $X\in{\cal X}$.
We shall, in the sequel, presuppose a recursive specification
$\Gamma$ defining the process identifiers in $\cal X$, and we shall
usually simply write $X\defeqn p$ for $\Gamma(X)=p$. Note that, by our
assumption that $\cal X$ is finite, $\Gamma$ is finite too.

\begin{figure}[htb]
  \centering
  \begin{osrules}
        \osrule*{}{\emp \downarrow}
        \qquad \qquad
        \osrule*{}{\pref[a]p\step{a}p}
    \\
\osrule*{p \downarrow}{(p+q) \downarrow}
\quad
\osrule*{q \downarrow}{(p+q) \downarrow}
\quad
    \osrule*{p \step{a} p'}{p + q \step{a} p'}
    \quad
    \osrule*{q \step{a} q'}{p + q \step{a} q'}
  \\
 \osrule*{p \downarrow & q \downarrow}{p \merge  q \downarrow}
  \qquad
    \osrule*{p\step{a} p'}{p \merge  q \step{a} p' \merge  q}
  \qquad
    \osrule*{q \step{a} q' }{p \merge  q
      \step{a} p \merge q'}
\\
\osrule*{p \step{c!d} p' & q \step{c?d} q'}{p \merge  q \step{c(d)} p' \merge q'} \qquad
\osrule*{p \step{c?d} p' & q \step{c!d} q'}{p \merge  q \step{c(d)} p' \merge q'}
 \\
\osrule*{p \downarrow}{\encap{c}{p} \downarrow} \qquad \osrule*{p \step{a} p' & a \neq c!d,c?d}{\encap{c}{p} \step{a} \encap{c}{p'}} 
\\
\osrule*{p \downarrow}{\abstr{\cal C}{p} \downarrow}  \qquad \osrule*{p \step{c(d)} p'}{\abstr{\cal C}{p} \step{\tau} \abstr{\cal C}{p'}} \qquad \osrule*{p \step{a} p' & a \neq c(d)}{\abstr{\cal C}{p} \step{a} \abstr{\cal C}{p'}}
\\
 \osrule*{\term{p} & X\defeqn p}{\term{X}}  \qquad   \osrule*{p\step{a}p' & X\defeqn p}{X\step{a}p'}
  \end{osrules}
\caption{Operational semantics for BCP$_{\tau}$ with standard communication ($a \in \mathcal{A}_{\tau}, c \in \mathcal{C}, d \in \mathcal{D}, X \in \mathcal{X}$).}
\label{fig:semantics-tspseq}
\end{figure}

We associate behaviour with process expressions by defining, on the
set of process expressions, a unary acceptance predicate $\term{}$
(written postfix) and, for every $a\in\Act_{\tau}$, a binary transition
relation $\step{a}$ (written infix), by means of the transition system
specification presented in Figure~\ref{fig:semantics-tspseq}. 

By means of these rules, the set of process expressions turns into a labelled transition system, so we have strong bisimilarity and branching bisimilarity on  process expressions. 
Suppose $p$ is a BCP$_{\tau}$ expression (possibly containing variables from $\cal X$). Then the \emph{process} of $p$ is the branching bisimulation equivalence class of the process graph generated by the operational rules.

We explicitly state the result in \cite{BBR10}:

\begin{thm}
The process of every BCP$_{\tau}$ expression is executable. For every executable process, there is a BCP$_{\tau}$ expression with this process.
\end{thm}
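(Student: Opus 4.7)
The plan is to prove the two implications separately.

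\emph{Every BCP$_\tau$ expression defines an executable process.} The operational semantics of Figure~\ref{fig:semantics-tspseq} is algorithmic: given a closed BCP$_\tau$ expression $p$, one can compute the finite set of its outgoing transitions and decide whether $p{\downarrow}$ by structural analysis on $p$ together with lookup in the recursive specification $\Gamma$. Hence the process graph generated by $p$ is effective, and the implication follows from the characterisation of executable processes as effective processes modulo divergence, recalled in Section~\ref{rtm}. (A more direct, but more laborious, alternative is to exhibit an RTM simulating the operational semantics.)

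\emph{Every executable process has a BCP$_\tau$ expression.} By the preceding theorem such a process is that of some queue automaton $Q = ({\cal S},{\cal A},{\cal D},\rightarrow,\uparrow,\downarrow)$, which by Lemma~\ref{singenqsepdeq} may be assumed to use only singleton enqueues and separate dequeues. The idea, as already suggested in the introduction, is to realise $Q$'s process as the parallel composition of a regular finite-control component $R$ with a BCP$_\tau$ specification $P$ of the queue of Example~\ref{queueqa}, encapsulated and abstracted on the communication ports $i$ and $o$, namely as the expression $\abstr{\{i,o\}}{\encap{i}{\encap{o}{R \merge P}}}$. For each state $s \in {\cal S}$ I would introduce a process identifier $X_s$ whose defining equation sums, over all transitions leaving $s$: the summand $o?d \cdot a \cdot i!d' \cdot X_t$ for each transition $s \xrightarrow{a[d/d']} t$ with $d, d' \in {\cal D}$, analogous summands for the $\varepsilon$ and $*$ variants (using a designated data symbol to encode the empty-queue test on the $o$ port), plus a summand $\emp$ if $s{\downarrow}$; then set $R := X_{\uparrow}$.

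The main obstacle is twofold. First, the queue itself has unbounded memory and so is not a regular process; obtaining a finite recursive BCP$_\tau$ specification $P$ for it requires a nontrivial use of parallel composition and is the technical heart of the argument in \cite{BBR10}. Second, one must verify that the constructed expression is branching bisimilar to the process graph of $Q$: the candidate relation pairs each reachable state $(s,\delta)$ of $Q$ with the state of the composed expression whose finite-control part is at $X_s$ and whose queue contents is $\delta$. After abstraction, each communication on $i$ or $o$ becomes a $\tau$-step, and thanks to the singleton-enqueue/separate-dequeue normal form the chain of these $\tau$-steps between two visible actions is deterministic and hence inert, so the transfer conditions for branching bisimulation reduce to a routine case analysis on the shapes of the transitions of $Q$, together with a compatibility check for termination.
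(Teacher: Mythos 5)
The paper does not actually prove this theorem itself---it is quoted from \cite{BBR10}---so your attempt has to be measured against the constructions the paper carries out elsewhere. Your first half is acceptable as a sketch: the operational rules of Figure~\ref{fig:semantics-tspseq} yield a decidable (hence recursively enumerable) transition relation and termination predicate on the reachable closed terms, so the process graph of a BCP$_{\tau}$ expression is effective, and executability then follows from the effectiveness characterisation recalled at the end of Section~\ref{rtm} (which indeed only needs plain, non-divergence-preserving, branching bisimilarity, as the paper uses).

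The second half has a genuine gap. In your control process the summand for a transition $s \xrightarrow{a[d/d']} t$ is $o?d\cdot a\cdot i!d'\cdot X_t$: the queue is interrogated \emph{before} the visible action $a$ is performed. After abstraction that $o$-communication becomes a $\tau$-step, and this $\tau$-step resolves nondeterministic choices prematurely. Concretely, if $Q$ has two transitions $s \xrightarrow{a[d/\varepsilon]} t$ and $s \xrightarrow{b[d/\varepsilon]} u$ with the same head symbol $d$ but different actions, then $X_s$ contains summands $o?d\cdot a\cdots$ and $o?d\cdot b\cdots$; the hidden communication with the queue commits to one of them, reaching a state that can do $a$ but no longer $b$, whereas the state $(s,\zeta d)$ of the process graph of $Q$ still offers both. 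That $\tau$-step is therefore not inert, and your candidate relation is not a branching bisimulation; since the whole point of the paper is the \emph{nondeterministic} queue automaton, this case cannot be assumed away. The repair---used in the paper's subsequent theorem that $p \bbisim \abstr{\cal C}{\encap{io}{q \merge Q^{io}}}$---is to let the finite control remember the current head of the queue (or its emptiness) in its state, via states $s_d$ and $s_{\varepsilon}$, so that the visible action is performed \emph{first} and all interactions with the queue happen afterwards, where they are deterministic and inert. You correctly name the finite specifiability of the queue and the bisimulation check as obstacles, but the obstacle that actually breaks your construction as written is this ordering issue, which you do not address.
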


\begin{cor}
Let $p,q$ be two BCP$_{\tau}$ expressions, and $c \in {\cal C}$ a communication port. Then $p \merge q$, $\encap{c}{p \merge q}$ and $\abstr{\cal C}{\encap{c}{p \merge q}}$ denote executable processes.
\end{cor}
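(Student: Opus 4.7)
The plan is essentially to observe that this is a direct corollary of the preceding Theorem, rather than a result requiring new construction. Since BCP$_{\tau}$ includes the merge $\merge$ as a binary operator, the encapsulation $\encap{c}{\_}$ and abstraction $\abstr{\cal C}{\_}$ as unary operators in its syntax, any BCP$_{\tau}$ expressions $p$ and $q$ can be combined using these operators to again produce BCP$_{\tau}$ expressions. So the plan is: (i) note that $p \merge q$ is itself a BCP$_{\tau}$ expression by the formation rule for $\merge$; (ii) then $\encap{c}{p \merge q}$ is a BCP$_{\tau}$ expression by the formation rule for the encapsulation operator applied to $p \merge q$; (iii) then $\abstr{\cal C}{\encap{c}{p \merge q}}$ is a BCP$_{\tau}$ expression by the formation rule for the abstraction operator applied to $\encap{c}{p \merge q}$.

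Once each of the three expressions is recognised as a bona fide BCP$_{\tau}$ expression, I would appeal directly to the preceding Theorem, which asserts that the process of any BCP$_{\tau}$ expression is executable. Applying this to each of the three expressions in turn yields the three required executability statements.

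There is no real obstacle here, since the closure of BCP$_{\tau}$ under its own operators is immediate from the definition of its syntax, and the heavy lifting (namely that BCP$_{\tau}$ expressions correspond to executable processes) has already been carried out in the cited Theorem, which in turn rests on \cite{BBR10} together with our characterisation of executable processes as processes of queue automata. The point of isolating this corollary, as the ensuing discussion makes clear, is to prepare for the forthcoming use of communication and hiding to build an executable process as a regular controller interacting with the queue process; the corollary simply certifies that such parallel-and-hide combinations remain in the executable realm.
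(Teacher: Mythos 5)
Your proposal is correct and matches the paper's (implicit) argument exactly: the corollary is stated without proof precisely because $p \merge q$, $\encap{c}{p \merge q}$ and $\abstr{\cal C}{\encap{c}{p \merge q}}$ are BCP$_{\tau}$ expressions by the closure of the syntax under these operators, so the preceding theorem applies directly to each.
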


In this way, we can express the communication of two processes. Using Theorem~\ref{2qtoq}, we can also obtain this result directly for queue automata.


\begin{exa}
We give the following finite specification of the queue process of Example~\ref{queueqa}, adapted from \cite{GV93}:
\[ Q^{io} \defeqn \emp + o!\varepsilon.Q^{io} + \sum_{d \in {\cal D}} i?d.\abstr{\cal C}{\encap{\ell}{Q^{i\ell} \merge (\emp + o!d.Q^{\ell o})}} \]
This specification has 6 variables $Q^{io}, Q^{i\ell}, Q^{\ell o}, Q^{o\ell}, Q^{\ell i}, Q^{oi}$ and uses data set ${\cal D} \cup \{\varepsilon\}$.
\end{exa}

Recall that a regular process has a process graph with finitely many states and transitions.

\begin{thm}
Let $p$ be an executable process. Then there is a regular process $q$ such that $p \bbisim \abstr{\cal C}{\encap{io}{q \merge Q^{io}}}$.
\end{thm}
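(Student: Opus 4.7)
The plan is to extract a queue automaton realising $p$, normalise it via Lemma~\ref{singenqsepdeq}, and encode its finite control as a regular BCP$_{\tau}$ specification that delegates its queue memory to $Q^{io}$ through the ports $i$ and $o$. By the previous theorem, $p$ is the process of some queue automaton $Q = ({\cal S}, {\cal A}, {\cal D}, \rightarrow, \uparrow, \downarrow)$, and by the lemma I may assume every transition has one of three normalised shapes: a singleton enqueue $s \xrightarrow{a[*/d]} t$, a separate dequeue $s \xrightarrow{a[d/\varepsilon]} t$, or an empty-check $s \xrightarrow{a[\varepsilon/\varepsilon]} t$.

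Next, I introduce a process identifier $X_s$ for each $s \in {\cal S}$, whose defining equation is obtained by summing, over the outgoing transitions of $s$, the following summands: a singleton enqueue contributes $a.i!d.X_t$ (perform the observable action, then dispatch the enqueued datum to the queue); a separate dequeue contributes $o?d.a.X_t$ (synchronise with the queue on the required head datum, then perform the observable action); an empty-check contributes $o?\varepsilon.a.X_t$; and an additional summand $\emp$ is added whenever $s \downarrow$. Because ${\cal S}$ and the transition relation of $Q$ are finite, this specification is regular, and I take $q = X_{\uparrow}$.

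The technical core is the construction of a branching bisimulation $R$ between the process graph of $Q$ and that of $\abstr{\cal C}{\encap{io}{q \merge Q^{io}}}$. Writing $Q^{io}_\delta$ for the expression to which $Q^{io}$ evolves once its queue holds contents $\delta$, I base $R$ on the pairing of each state $(s, \delta)$ of $Q$'s process graph with $\abstr{\cal C}{\encap{io}{X_s \merge Q^{io}_\delta}}$, and extend it to the reachable intermediate combined-process states appearing mid-summand, pairing each intermediate with the queue-automaton state that captures its residual behaviour. The verification proceeds case by case on the normalised transition shape, using that encapsulation blocks every unsynchronised $i?d$, $i!d$, $o?d$, $o!d$, and $o!\varepsilon$ (so the combined process exhibits only the transitions corresponding to fired $Q$-transitions), that the intermediate reached by an enqueue synchronisation has only its outgoing $\tau$-step as a transition (so this $\tau$-step is inert), and that $Q^{io}$ admits $\emp$-termination at every queue content (so termination in the combined process matches $s \downarrow$).

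The delicate step is the dequeue case: synchronising on $o?d$ or $o?\varepsilon$ yields a $\tau$-step that commits the regular process to one particular dequeue summand before its observable action has fired, and the committed intermediate offers strictly fewer branches than the uncommitted combined source. The branching-bisimulation clause for this $\tau$-step is therefore satisfied by pairing the committed intermediate with the queue-automaton state $(t, \zeta)$ that it is about to become after performing $a$, and by invoking the stuttering $\step{(\tau)}$ option of the definition together with the crucial property that $o?d$ and $o?\varepsilon$ only synchronise with $Q^{io}$ when the head is exactly $d$ or the queue is empty respectively, so every dequeue-induced $\tau$-step in the combined system mirrors exactly one enabled $Q$-transition. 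A detailed case analysis on the combinations of transition shape and current queue contents then completes the verification that $R$ is a branching bisimulation.
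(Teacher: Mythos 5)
Your overall architecture (normalise $Q$ via Lemma~\ref{singenqsepdeq}, turn the finite control into a regular specification with one identifier per state, delegate the memory to $Q^{io}$ over ports $i$ and $o$) matches the paper, and your enqueue summand $a.i!d.X_t$ is fine: the intermediate after $a$ has the $i!d$-synchronisation as its unique transition, so that $\tau$-step is inert. The gap is in the dequeue and empty-check summands $o?d.a.X_t$ and $o?\varepsilon.a.X_t$, where you query the queue \emph{before} performing the observable action. Suppose $s$ has two transitions $s \xrightarrow{a[d/\varepsilon]} t$ and $s \xrightarrow{b[d/\varepsilon]} u$, and the current queue content is $\zeta d$. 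The combined state $\abstr{\cal C}{\encap{io}{X_s \merge Q^{io}_{\zeta d}}}$ has a $\tau$-step (from the $o?d$/$o!d$ synchronisation) into $\abstr{\cal C}{\encap{io}{a.X_t \merge Q^{io}_{\zeta}}}$, which can no longer do $b$; the source state $(s,\zeta d)$ of $Q$'s process graph can do both $a$ and $b$. This $\tau$-step is not inert, and neither of your candidate pairings for the committed intermediate works: relating it to $(s,\zeta d)$ fails because $(s,\zeta d)$ offers $b$ and the intermediate cannot recover it by silent steps, and relating it to $(t,\zeta)$ fails because the intermediate still has an outgoing $a$-step that $(t,\zeta)$ need not match. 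The same loss of options occurs whenever a dequeue summand competes with any other summand (including an always-enabled enqueue), so the defect is not a corner case.

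The paper avoids this by never letting the regular process consult the queue before acting: it caches the current head of the queue (or emptiness) in the finite control, using states $s_d$ for $d \in {\cal D}\cup\{\varepsilon\}$, so that enabledness of each observable action is decided by the finite state alone and the action is always performed \emph{first}. The price is that after a dequeue the regular process must rediscover the new head, which it does by dequeuing it, planting a bookmark $\$$, and rotating the queue back into order via further inert $\tau$-interactions. If you want to salvage your encoding you need some such head-caching device (or another mechanism that defers all queue communication until after the visible action); as written, the relation $R$ you describe is not a branching bisimulation.
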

\begin{proof}
Let $p$ be an executable process. Then there is a queue automaton ${Q} = ({\cal S},{\cal A},{\cal D},\rightarrow,\uparrow,\downarrow)$ that defines $p$. By Lemma~\ref{singenqsepdeq}, we can assume without loss of generality that $Q$ has singleton enqueues and separate dequeues. In order to define $q$, we have to remember the head of the queue or remember that the queue is empty, in order to be able to know which following step is possible.
\begin{enumerate}
\item for each state $s \in {\cal S}$ and $d \in {\cal D} \cup \{\varepsilon\}$, $q$ has states $s_d, s_d^1, s_d^2, s_d^3$;
\item the queue $Q^{io}$ uses data set ${\cal D} \cup \{\$\}$; 
\item the initial state of $q$ is $\uparrow_{\varepsilon}$, and state $s_d$ is final whenever $s$ is final in $Q$;
\item whenever $Q$ has $s \xrightarrow{a[\varepsilon/\varepsilon]} t$, then $q$ has $s_{\varepsilon} \step{a} t_{\varepsilon}$;
\item whenever $Q$ has $s \xrightarrow{a[*/d]} t$ for some $d \in {\cal D}$, then $q$ has $s_e \step{a} t_e^1 \step{i!d} t_e$ for all $e \in {\cal D}$ and $s_{\varepsilon} \step{a} t_d^1 \step{i!d} t_d$;
\item whenever $Q$ has $s \xrightarrow{a[d/\varepsilon]} t$ for some $d \in {\cal D}$, then $q$ has $s_d \step{a} s_d^1 \step{o?d} s_d^2 \step{o?\varepsilon} t_{\varepsilon}$; moreover, for all $e \in {\cal D}$ $q$ has steps $s_d^2 \step{o?e} t_e^3 \step{i!\$} t_e^2 \step{i!e} t_e^1 \step{o?\$} t_e$ and for all $f \in {\cal D}$ steps $t_e^1 \step{o?f} t_e^1$.
\end{enumerate}
All input and output steps in $q$ will successfully communicate with the queue, and be turned into $\tau$-steps by the abstraction operator. All will turn out to be inert.
\end{proof}

Note that the converse of this theorem is also true, as every process defined by a finite specification over BCP$_{\tau}$ with standard communication is executable.

Just like we did in \cite{BLT13}, we can prove a universal queue automaton $U$ exists. For an arbitrary queue automaton $M$, let $\overline{M}$ be the deterministic queue automaton that outputs the G\"{o}del number of $M$ (in some appropriate representation) along a special communication port $u$, then terminates with empty queue, and has no other behaviour. Then a \emph{universal} queue automaton $U$ is such that $\abstr{\cal C}{\encap{u}{\overline{M} \merge U}}$ has the same process as $M$ for all queue automata $M$ (here, we use the BCP$_{\tau}$-expressions of these queue automata to define this process).

\section{A hierarchy}\label{characterisation}

A computer shows interaction between the finite control and the memory. The finite control can be represented by a regular process (a finite automaton). 
In this article, we considered a memory in the form of a queue.

In \cite{BCT08}, we considered a memory in the form of a stack, and we established that a pushdown process can be characterised as a regular process communicating with a stack. This work was continued in \cite{BCL23} and \cite{BL24} to find the process algebra TSP;$_{sc}$ that is associated with pushdown automata.  TSP;$_{sc}$ is obtained from the process algebra BCP$_{\tau}$ of the previous section by leaving out parallel composition, encapsulation and abstraction, and adding sequencing with sequential value passing.

In \cite{BL23}, we considered a memory in the form of a bag, and we established that a parallel pushdown process can be characterised as a regular process communicating with a bag. We found the process algebra associated with parallel pushdown automata. This process algebra is obtained from the process algebra BCP$_{\tau}$ of the previous section by leaving out the abstraction operator and adding the priority operator of \cite{BBK86}.

We see that the queue is the prototypical executable process, as all executable processes can be realised as a regular process communicating with a queue. Likewise,
 the bag is the prototypical parallel pushdown process. A bag is an executable process, but not a pushdown process. Further,
the stack is the prototypical pushdown process, but not a parallel
pushdown process. 

A counter can be realised as a stack with a singleton data set, and also as a bag with a singleton data set. Thus, it is in the intersection of pushdown processes and parallel pushdown processes. It is not a regular process, as it has infinitely many different states that are not bisimilar. We conjecture that every process in the intersection of pushdown processes and parallel pushdown processes can be realised as a regular process communicating with a counter, but have no proof of this yet.
 Figure~\ref{fig:classification} provides a complete picture.

\begin{figure}[htb]
\begin{center}
\begin{tikzpicture}
\usetikzlibrary{shapes}
\clip (0,0) circle (3.3cm);
\node[circle,minimum size=6.6cm,draw] (exe) at (0,0){};
\path (exe.north) coordinate (exetext) node[below,yshift=-0.2cm]{Executable};
\node[circle,minimum size=2.2cm,draw, anchor=south] (reg) at (exe.south){Regular};
\node[ellipse, minimum width=5.61cm, minimum height=3.52cm,draw,anchor=west,rotate=45] (ppd) at (reg.south west){};
\node[circle,minimum size=1.1cm,anchor=east,yshift=.3cm] (ppdtext) at
(exe.east){Parallel Pushdown};
\node[ellipse, minimum width=5.61cm, minimum height=3.52cm,draw,anchor=east,rotate=-45] (pd) at (reg.south east){};
\node[circle,minimum size=1.1cm,anchor=west,yshift=.3cm] (pdtext) at (exe.west){Pushdown};
\path (0,-0.5) coordinate (C) node[below]{Counter};
\fill (C) circle(2pt);
\path (2,-0.5) coordinate (B) node[below]{Bag};
\fill (B) circle(2pt);
\path (-2,-0.5) coordinate (S) node[below]{Stack};
\fill (S) circle(2pt);
\path (0,1.5) coordinate (Q) node[below]{Queue};
\fill (Q) circle(2pt);
\end{tikzpicture}
\end{center}
  \caption{Classification of Executable, Pushdown, Parallel Pushdown,
    and Regular processes and the prototypical processes Queue, Bag,
    Stack and Counter.}\label{fig:classification}
\end{figure}
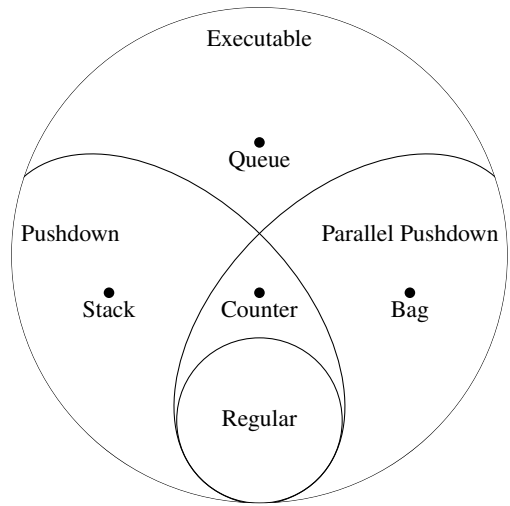

\section{Conclusion}

We considered the computational model of the Queue Automaton, and have proved that it is equally expressive as the Reactive Turing Machine of \cite{BLT13}. Thus, a process is executable if and only if it is the process of a queue automaton, and a language or function is computable if and only if it is the language or function of a queue automaton.
Every executable process can be defined as a regular process communicating with a queue. This fits in very well with earlier results, that every pushdown process can be defined as a regular process communicating with a stack, and a parallel pushdown process can be defined as a regular process communicating with a bag.
We think that a pushdown automaton can be better called a stack automaton, and a parallel pushdown automaton a bag automaton, in order to emphasise the relation to the queue automaton.
As grammar for executable processes we use the process algebra BCP$_{\tau}$, Basic Communicating Processes with abstraction and standard communication.



\bibliographystyle{eptcs.bst}
\bibliography{main}

\end{document}